\newcommand{\abs}[1]{{\left\lvert #1 \right\rvert}}
\newcommand{\suchthat}{\;\ifnum\currentgrouptype=16 \middle\fi|\;}
\RenewDocumentCommand{\Pr}{ s o o m }{%
  \IfBooleanTF{#1}{%
    \IfNoValueTF{#2}{%
      \specrandop{Pr}[#4]%
    }{%
      \IfNoValueTF{#3}{%
        \specrandop{Pr}\mathopen{#2[}#4\mathclose{#2]}%
      }{%
        \specrandop{Pr}_{#3}\mathopen{#2[}#4\mathclose{#2]}%
      }
    }%
  }{%
    \IfNoValueTF{#2}{%
      \specrandop{Pr}\mleft[#4\mright]%
    }{%
      \specrandop{Pr}_{#2}\mleft[#4\mright]%
    }
  }%
}
\newtheorem{theorem}{Theorem}
\newtheorem{definition}[theorem]{Definition}
\newtheorem{lemma}[theorem]{Lemma}
\newtheorem{corollary}[theorem]{Corollary}
\newcommand\G{{\cal G}}
\newcommand\F{{\cal F}}
\title{Analysis of Amnesiac Flooding}
\author{%
  \href{https://orcid.org/0000-0001-9964-8816}{%
    \includegraphics[height=0.8em]{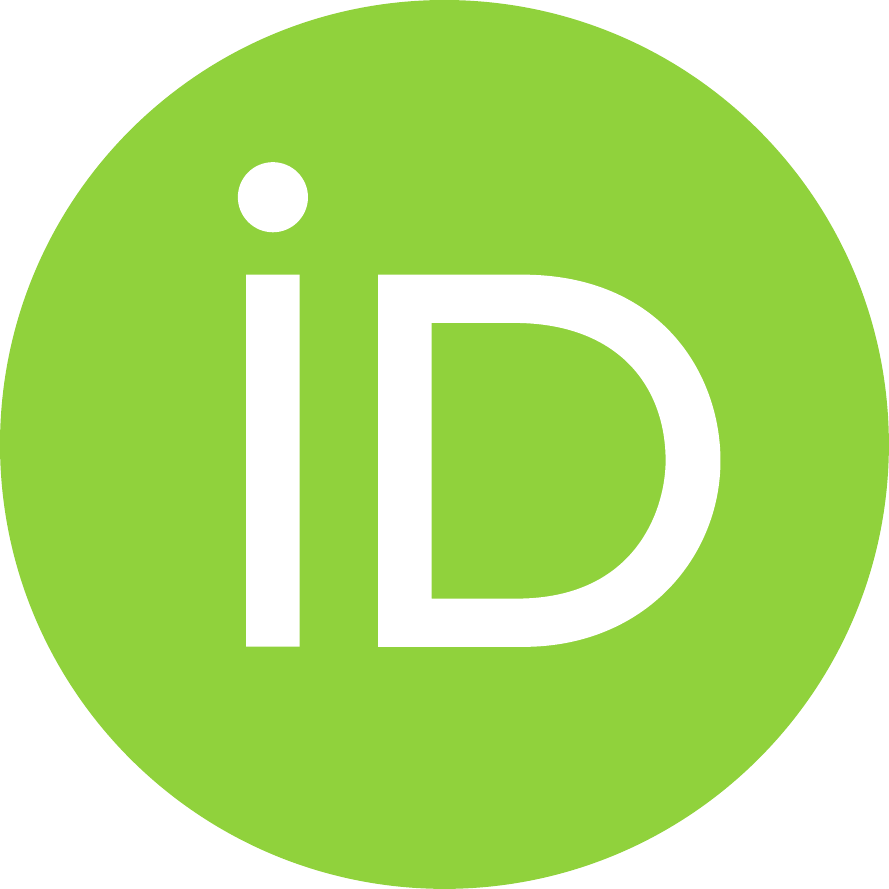}%
    \hspace{1mm}%
    Volker Turau%
  } \\
  Institute of Telematics \\
  Hamburg University of Technology \\
  21073 Hamburg, Germany \\
  \texttt{turau@tuhh.de}
}
\begin{document}

\newcommand{\FL}{\textit{Flood}\xspace}

\newcommand{\fontmathtext}[1]{\mathsf{#1}}%
\DeclareRobustCommand{\alorithm}[1]{{\ensuremath{{\cal A}_\fontmathtext{#1}}}\xspace}
\DeclareRobustCommand{\SynFl}{\alorithm{AF}}

\maketitle

\begin{abstract}
  The broadcast operation in distributed systems is used to spread
  information located at some nodes to all other nodes. This operation
  is often realized by flooding, where the source nodes send a message
  containing the information to all neighbors. Each node receiving the
  message for the first time forwards it to all other neighbors. A
  stateless variant of flooding for synchronous systems is called {\em
    amnesiac flooding}. In this case, every time a node receives a
  message, it forwards it to those neighbors, from which it did not
  receive the message in the current round. The algorithm is oblivious
  and therefore scales very well. Stateless protocols are advantageous
  in high volume applications, increasing performance by removing the
  load caused by retention of session information and by providing
  crash tolerance. In this paper we analyze the termination time of
  amnesiac flooding. We define the {\em $(k,c)$-flooding problem}. It
  asks whether there exists a set $S$ of size $k$, such that amnesiac
  flooding when started concurrently by all nodes of $S$ terminates in
  at most $c$ rounds. We prove that this problem is NP-complete. We
  provide sharp upper and lower bounds for the time complexity of
  amnesiac flooding and reveal a discrepancy between bipartite and
  non-bipartite graphs. All results are based on the insight, that for
  every non-bipartite graph there exists a bipartite graph such that
  the execution of amnesiac flooding on both graphs is strongly
  correlated. This construction considerably simplifies existing
  proofs for amnesiac flooding and allows to analyze the
  $(k,c)$-flooding problem.
\end{abstract}

\keywords{Distributed Algorithms, Flooding}

\section{Introduction}
The most basic algorithm to disseminate information in a distributed
system is deterministic flooding. The originator of the information
sends a message with the information to all neighbors and whenever a
node receives this message for the first time, it sends it to all its
neighbors in the communication graph $G$. This algorithm uses
$2\abs{E}$ messages and terminates in $\epsilon_G(v_0)+1$ rounds,
where $v_0$ is the originating node and $\epsilon_G(v_0)$ is the
maximal distance of $v_0$ to any other node.  The
flooding algorithm requires each node to maintain for each message a
marker that the message has been forwarded. This requires storage per
node proportional to the number of disseminated messages. Another
issue is how long these markers are kept. Thus, stateful algorithms
such as the classic flooding algorithm do not scale well. Stateful
algorithms are therefore unsuitable for resource-constrained devices
as those used in the Internet of Things.


In {\em stateless protocols} each message travels on it's own without
reference to any previous message. Stateless protocols are very
popular in client-server applications because of their high degree of
scalability. They simplify the design of the server and require less
resources because servers do not need to keep track of session
information or a status about each communicating partner for multiple
requests. In addition they provide fault tolerance after node crashes.
Despite their significance for practical application, stateless
protocols have only received limited attention in theoretical
research.

Recently Hussak and Trehan proposed a stateless information
dissemination algorithm for synchronous systems called {\em amnesiac
  flooding} \cite{Hussak:2019,Hussak:2020}. In this algorithm a node after
receiving a message, forwards it to those neighbors from which it did
not receive the message in the current round. Obviously, this variant of
flooding is stateless and avoids the above mentioned storage issues.
It is not obvious that amnesiac flooding terminates since a node can
potentially forward the same message several times. Hussak et al.\
analyzed the termination time of amnesiac flooding with a single
originating node $v_0$ \cite{Hussak:2019,Hussak:2020}. They show that
synchronous amnesiac flooding terminates on any graph. For bipartite
graphs amnesiac flooding terminates after $\epsilon_G(v_0)$ rounds,
i.e., the same number of rounds as the marker based algorithm. In the
non-bipartite case amnesiac flooding requires at least
$\epsilon_G(v_0)+1$ and at most $\epsilon_G(v_0) + Diam(G)+1$ rounds,
where $Diam(G)$ denotes the diameter of $G$. The proof of this result
in \cite{Hussak:2020} is rather technical and does not give much
insight into the problem.

Hussak et al.\ stated the following as an open problem: What happens
when multiple nodes $S$ start the flooding process with the same
message? In this paper we give a detailed answer to this problem. The
problem of multi-source flooding is motivated by disaster monitoring:
A dense distributed monitoring system monitors a geographical region.
When several sensors detect an event, they flood this information into
the network.

The contribution of this paper is twofold. First of all we prove that
for every non-bipartite graph and every set $S$ there exists a
bipartite graph such that the execution of amnesiac flooding on both
graphs is strongly correlated and termination times coincide. The
auxiliary graph captures on an intuitive level, what happens during
the flooding process. This graph considerably simplifies the proof of
\cite{Hussak:2020}. Besides leading to better bounds it also allows to
determine starting nodes for amnesiac flooding with minimal
termination time. We give upper and lower bounds for the time
complexity of amnesiac flooding. Secondly we define the {\em
  $(k,c)$-flooding problem}. Let $G=(V,E)$ be a connected graph and
$c,k$ positive integers with $k\le|V|$. The problem is to decide
whether there exists a subset $S$ of $V$ of size $k$, such that
amnesiac flooding when started concurrently by all nodes of $S$
terminates in at most $c$ rounds. We prove this problem is NP-complete
and illustrate the relationship of this problem to the classical {\em
  k-center} problem \cite{Kariv:1979,Hsu:1979}.

After introducing our notation and reviewing the state of the art we
present in Section~\ref{sec:basic-properties} our implementation
\SynFl of amnesiac flooding. In the next section we summarize our main
results. In Section~\ref{sec:s1} we present the construction of the
auxiliary graph $\G$ for the case $|S|=1$ and demonstrate that this
graph immediately proves the results of
\cite{Hussak:2019,Hussak:2020}. In the following section we reduce the
case $|S|>1$ to the case $|S|=1$ and prove the main theorems. We
conclude the paper with some open problems.

\section{Notation}
\label{sec:notation}
In the following $G(V,E)$ is always a finite, connected, undirected,
and unweighted graph with $n=|V|$ and $m=|E|$. The minimal node degree
of $G$ is denoted by $\delta$. For $u,v\in V$ denote by $d_G(v,u)$ the
{\em distance} in $G$ between $v$ and $u$, i.e., the number of edges
of a shortest path between $v$ and $u$. For $U\subseteq V$ and
$v\in V$ let $d_G(v,U)= \min\{d_G(v,u)\suchthat u\in U\}$ and
$d_G(U)= \max \{d_G(v,U) \suchthat v\in V\}$. Furthermore, $G[U]$
denotes the graph induced by $U$. For $v\in V$ denote by
$\epsilon_G(v)$ the {\em eccentricity} of $v$ in $G$, i.e., the
greatest distance between $v$ and any other node in $G$, i.e.,
$\epsilon_G(v)=d_G(\{v\})$. In a few cases we consider disconnected
graphs. In this case we define $\epsilon_G(v)=\epsilon_U(v)$, where
$U$ is the connected component of $G$ containing $v$. The {\em radius}
$Rad(G)$ (resp.\ {\em diameter} $Diam(G)$) of $G$ is the minimum
(resp.\ maximum) eccentricity of any node of $G$. A {\em central} node
in $G$ is a node $v$ with $\epsilon_G(v)=Rad(G)$. An edge $(u,w)\in E$
is called a {\em cross edge} with respect to a node $v_0$ if
$d_G(v_0,u) = d_G(v_0,w)$. Any edge of $G$ that is not a cross edge
with respect to $v_0$ is called a {\em forward edge} for $v_0$.

Let $n\ge k\ge 1$ be an integer. We call
$r_k(G) = \min \{d_G(U)\suchthat \abs{U}=k\}$ the {\em k-radius} of
$G$. Thus, $r_1(G)=Rad(G)$. A subset $U\subseteq V$ with $\abs{U}=k$
and $r_k(G)=d_G(U)$ is called a {\em $k$-center} of $G$. Similarly we
call
$r^{ni}_k(G)=\min \{d_G(U)\suchthat \abs{U}=k \mbox{ and } G[U] \mbox{
  contains no isolated node}\}$ the {\em non-isolated k-radius} of
$G$. Clearly $r_k(G) \le r^{ni}_k(G)$. A subset $U\subseteq V$ with
$\abs{U}=k$ such that $G[U]$ has no isolated node and
$r^{ni}_k(G)=d_G(U)$ is called a {\em non-isolated $k$-center} of $G$.
A {\em total dominating set} of a connected graph $G$ is a set $S$ of
nodes of $G$ such that every node is adjacent to a node in $S$. The
{\em total domination number} of $G$, denoted by $\gamma_t(G)$, is the
minimum cardinality of a total dominating set of $G$. Note that a
non-isolated $k$-center with radius $1$ is a total dominating set.
Thus, $\gamma_t(G)\le k$ if and only if $r^{ni}_k(G)=1$.

Throughout the paper we consider a synchronous distributed system.
This means that algorithms are executed in rounds of fixed lengths and
all messages sent by all nodes in a particular round are received and
processed in the next round. Furthermore, no messages are lost or
corrupted. An algorithm terminates, when all messages sent are
received. For a discussion of asynchronous amnesiac flooding we refer
to \cite{Hussak:2020}.


\section{State of the Art}
Different facets of stateless programming received a lot of attention
in recent years: MapReduce framework, monads in functional
programming, and reentrant code \cite{Dolev:2014}. Stateless protocols
are very popular in client-server applications because of their high
degree of scalability. They simplify the design of the server and
require less resources because servers do not need to keep track of
session information or a status about each communicating partner for
multiple requests. According to Awerbuch et al.\ statelessness implies
various desirable properties of distributed algorithms, such as:
asynchronous updates and self-stabilization \cite{Awerbuch:2008}. 

Broadcast in computer networks has been the subject of extensive
research. The survey paper \cite{Hedetniemi:1988} covers early work.
The classic flooding algorithm, where each node that receives the
message for the first time forwards it to all other neighbors,
requires in the worst case $Diam(G)$ rounds until all nodes have
received the message and uses $O(m)$ messages \cite{Peleg:2000}. This
result holds both in the synchronous and the asynchronous model.


The classic flooding algorithm is a stateful algorithm. Each node
needs to maintain for each message a marker that the message has been
forwarded. This requires storage per node proportional to the number
of disseminated messages. A stateless version of flooding was proposed
by Hussak and Trehan \cite{Hussak:2019}. Their algorithm -- called
amnesiac flooding -- forwards every received message to those
neighbors from which it did not receive the message in the current
round. Amnesiac flooding has a low memory requirement since
markers are only kept for one round. Note, that nodes may forward a
message more than once. In synchronous networks amnesiac flooding when
started by a node $v_0$ terminates after at most
$\epsilon_G(v_0) + Diam(G) + 1$ rounds \cite{Hussak:2020}. The proof
is based on an analysis of the forwarding process on a round by round
basis, whereas our analysis is based on an auxiliary
graph. We believe that our approach opens more possibilities for more
general problems related to amnesiac flooding. To the best of our
knowledge, the problem of simultaneously starting the flooding process
from many nodes has not been covered in the literature.

The $k$-center problem received a lot of attention since it was first
proposed \cite{Kariv:1979}. The task is to find a $k$-center of a
graph. The problem and many variants of it including some
approximations are known to be NP-hard \cite{Calik:2015,Hsu:1979}.

A problem related to broadcast is rumor spreading that describes the
dissemination of information in large and complex networks through
pairwise interactions. A simple model for rumor spreading is to assume
that in each round, each node that knows the rumor, forwards it to a
randomly chosen neighbor. For many network topologies, this strategy
is a very efficient way to spread a rumor. With high probability the
rumor is received by all vertices in time $\Theta(\log n)$, if the
graph is a complete graph or a hypercube
\cite{Frieze:1985,Feige:1990}. New results about rumor spreading can
be found in \cite{Mocquard:2020}.

\section[Amnesiac Flooding: Algorithm A\_AF]{Amnesiac Flooding:  Algorithm \SynFl}
 \label{sec:basic-properties}
 The goal of amnesiac flooding is to distribute a message -- initially
 stored at all nodes of a set $S$ -- to all nodes of the network. In
 the first round each node of $S$ sends the message to all its
 neighbors. In each of the following rounds each node that receives at
 least one message forwards the message to all of its neighbors from
 which it did not receive the message in this round. The algorithm
 terminates, when no more messages are sent. Algorithm~\ref{alg:fl}
 shows a formal definition of algorithm \SynFl. The code shows the
 handling of a single message $m$. If several messages are
 disseminated concurrently, each of them requires its own set $M$.

 \SetProgSty{}
 \begin{algorithm}
  \Input{A graph $G=(V,E)$, a subset $S$ of $V$, and a message $m$.}
  \BlankLine
  In round $1$ each node $v\in S$ sends message $m$ to each neighbor
  in $G$\;
   \round{Each node $v$ executes in every round $i>1$}{
   $M := N(v)$\;
   \ForEach{receive$(w,m)$}{$M := M \setminus \{w\}$}
   \If{$M \not=N(v)$} {
   \lForAll{$u\in M$}{send($u,m$)}}
 }
\caption{Algorithm \SynFl distributes a message in the
  graph $G$}\label{alg:fl}
\end{algorithm}

To illustrate the flow of messages of algorithm \SynFl we consider a
graph with four nodes as depicted in Fig.~\ref{fig:examples} (nodes
in $S$ are depicted in black). The top two rows show the flow of
messages for two different choices for $S$ with $|S|=1$. In the first
case \SynFl terminates after three rounds and in the second case after
five. The last row of this figure shows an example with $|S|=2$. In
this case the algorithm also terminates after three rounds.

\begin{figure}[h]%
    \includegraphics[valign=t,scale=0.99]{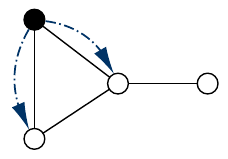}
    \hspace*{0.4ex}
    \includegraphics[valign=t,scale=0.99]{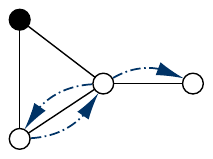}
    \includegraphics[valign=t,scale=0.99]{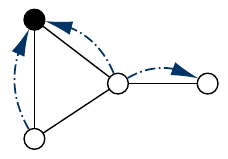}

    \vspace*{0.8ex}

    \hspace*{0.4ex}
    \includegraphics[valign=t,scale=0.99]{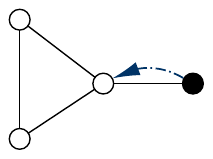}
    \hspace*{0.4ex}
    \includegraphics[valign=t,scale=0.99]{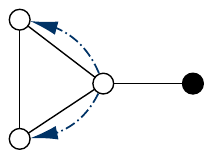}
    \includegraphics[valign=t,scale=0.99]{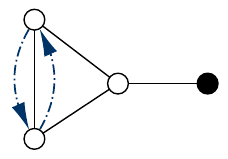}
    \hspace*{0.4ex}
    \includegraphics[valign=t,scale=0.99]{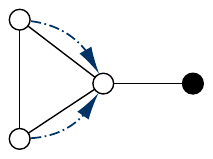}
    \hspace*{0.4ex}
    \includegraphics[valign=t,scale=0.99]{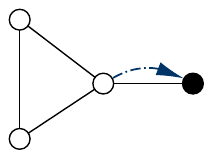}

    \vspace*{0.8ex}

    \includegraphics[valign=t,scale=0.99]{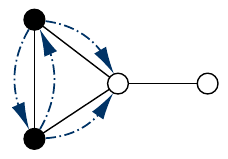}
    \hspace*{0.4ex}
    \includegraphics[valign=t,scale=0.99]{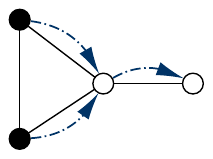}
    \hspace*{0.4ex}
    \includegraphics[valign=t,scale=0.99]{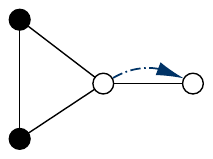}

  \caption{Three executions (one per row) of algorithm \SynFl for different choices of $S$.\label{fig:examples}}
\end{figure}

These examples demonstrate that the termination time of \SynFl highly
depends on $S$. This is captured by the following definition.

\begin{definition}
  For $S\subseteq V$ denote by $\FL_G(S)$ the number of rounds
  algorithm \SynFl requires to terminate when started by all nodes in
  $S$. For $1\le k \le n$ define
  \[\FL_k(G)=\min \{\FL_G(S) \suchthat S \subseteq  V \text{ with }
    |S|=k\}.\]
\end{definition}

Obviously, $\FL_n(G)=1$ for any graph $G$. For a complete graph $K_n$
with $n>2$ we have $\FL_i(K_n)=2$ for $1<i<n$ and $\FL_1(K_n)=3$. For
a cycle graph $C_n$ we have $\FL_k(C_n)=\lceil n/k\rceil$ if
$n\equiv 1 (2)$ and otherwise
\[\FL_k(C_n) =
  \begin{cases}
    \lceil n/(2k)\rceil &\quad\text{if } k\le n/2\\
    1 &\quad\text{if } k = n \\
    2 &\quad\text{otherwise.} \\
  \end{cases}\]

An instance of the {\em $(k,c)$-flooding problem} is defined as
follows. Let $G=(V,E)$ be a connected graph and $c,k$ positive
integers with $k\le|V|$. Is there a subset $S$ of $V$ of size $k$ such
that $\FL_G(S)\le c$? One might think that the $k$-center of a graph
is a good choice for $S$ to minimize $\FL_G(S)$. This is not always
the case. For the graph depicted in Fig.~\ref{fig:anticenter} with
$n\equiv 0 (2)$ the $1$-center consists of the node with distance
$n/2 -1$ to the rightmost node. Algorithm \SynFl started in this
central node terminates after $3n/2-2$ rounds. Whereas the minimal
value of $n-1$ rounds is independently of $n$ achieved for each of the
two leftmost nodes, i.e.\ $\FL_1(G)=n-1$. Note that $Rad(G)=n/2-1$.

\begin{figure}[h]%
  \hfill
    \includegraphics[valign=t]{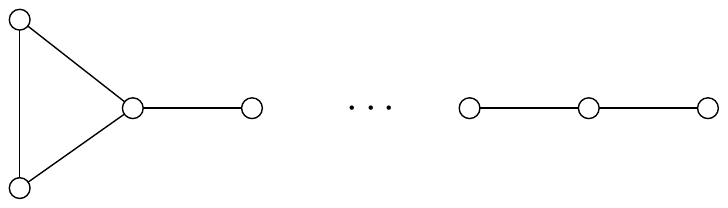}
  \hfill\null
  \caption{A graph $G$ with $\FL_1(G)=n-1$ and $\FL(v_0)=3n/2-2$ where
    $v_0$ is a central node.\label{fig:anticenter}}
\end{figure}

Whereas the $k$-radius monotonically decreases with increasing $k$ for
a fixed graph, this is not generally true for the value of $\FL_k(G)$,
e.g., for $n\equiv 0(2)$ we have $\FL_{n/2}(C_{n})=1$ but
$\FL_{n/2+1}(C_{n})=2$. We will show that monotony holds if $G$ is
non-bipartite.



\section{The Main Results}

The following theorems summarize the main results of this paper. The
main technical contribution is the auxiliary graph that nicely
captures, also on an intuitive level, what happens during the flooding
process. This bipartite graph leads to a very simple proof that
amnesiac flooding terminates when started with more than one initiator.
The results of \cite{Hussak:2019,Hussak:2020} are easily derived from
the auxiliary graph. Theorem~\ref{theo:fundametal} summarizes the main result.

\begin{theorem}\label{theo:fundametal}
  Let $G=(V,E)$ be a connected graph. For every $S\subseteq V$ there
  is a bipartite graph $\G(S)$ with a node $v^\ast$ such that
  $\FL_G(S)=\FL_{\G(S)}(v^\ast)-1\le d(S,V)+1+Diam(G)$. The number of
  messages Algorithm \SynFl sends is either $\abs{E}$ or $2\abs{E}$.
\end{theorem}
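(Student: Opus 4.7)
The plan is to construct a bipartite auxiliary graph $\G(S)$ in which \SynFl started from a single super-source $v^\ast$ simulates the multi-source execution on $G$ round by round. I would build $\G(S)$ in two pieces: take the bipartite double cover $\tilde G$ of $G$ on vertex set $V\times\{0,1\}$, replacing every edge $uv\in E$ by the two parity-crossing edges $(u,0)(v,1)$ and $(u,1)(v,0)$, and then adjoin a fresh vertex $v^\ast$ joined to $(v,0)$ for each $v\in S$, placed on the side opposite $V\times\{0\}$ so that $\G(S)$ remains bipartite. When $G$ is itself bipartite and $S$ lies inside one class of its bipartition, the double cover splits into two isomorphic copies and one suffices, so $\G(S)$ is just $G$ with the super-source attached.

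Next I would prove by induction on the round number that a $G$-message $u\to v$ is sent in round $t$ of the $S$-execution on $G$ if and only if the $\G(S)$-message $(u,(t{-}1)\bmod 2)\to(v,t\bmod 2)$ is sent in round $t{+}1$ of the $v^\ast$-execution on $\G(S)$. The base case (round $1$ of $G$ vs.\ round $2$ of $\G(S)$) is direct: every $(s,0)$ with $s\in S$ has received only from $v^\ast$ in round $2$ and therefore forwards to $\{(w,1):w\in N(s)\}$, matching the round-$1$ sends from $s$ in $G$. For the inductive step the sender-exclusion rule transports cleanly because the parity of the arrival round records which copy the message entered in $\tilde G$, so the condition ``do not reply to the $\G(S)$-sender'' is exactly ``do not reply to the $G$-sender'' at the corresponding round. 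This gives $\FL_G(S)=\FL_{\G(S)}(v^\ast)-1$, and since $\G(S)$ is bipartite the single-source bipartite termination result of \cite{Hussak:2020} yields $\FL_{\G(S)}(v^\ast)=\epsilon_{\G(S)}(v^\ast)$.

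For the upper bound I would estimate $\epsilon_{\G(S)}(v^\ast)$ directly. Any $(v,j)\in V\times\{0,1\}$ is reachable from $v^\ast$ along a path $v^\ast\to(s,0)\to\cdots\to(v,j)$, and distances in $\tilde G$ coincide with walks of prescribed parity in $G$. A walk of matching parity from the closest $s\in S$ has length $d_G(s,v)\le d(S,V)$; a walk of opposite parity is produced by routing through an odd cycle, and a pivot argument balancing source-to-pivot against pivot-to-$v$ shows the total is bounded by $d(S,V)+Diam(G)+1$. Hence $\epsilon_{\G(S)}(v^\ast)\le d(S,V)+Diam(G)+2$, i.e., $\FL_G(S)\le d(S,V)+Diam(G)+1$. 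For the message count, bipartite single-source \SynFl traverses each edge of $\G(S)$ exactly once; stripping the $|S|$ super-source edges leaves either $|E|$ messages (when $G$ is bipartite with $S$ on one side, so $\G(S)$ is $G$ with super-source) or $2|E|$ messages (when the genuine double cover is needed).

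The main obstacle is the parity bookkeeping in the inductive simulation: adjacent sources in $S$ exchange messages in round $1$ of the $G$-execution, and their translates in round $2$ of $\G(S)$ must be cancelled at round $3$ without generating spurious reflections, which is exactly where the double-cover construction earns its keep. A secondary but delicate issue is pinning down the eccentricity estimate so that the combined nearest-source distance plus odd-cycle detour really fits in $d(S,V)+Diam(G)+1$; this needs a careful pivot choice, since a naive additive decomposition into ``closest $s$'' plus ``length of an odd closed walk through $v$'' can exceed the target.
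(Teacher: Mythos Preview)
Your approach is correct, and in fact your auxiliary graph is \emph{isomorphic} to the paper's $\G(S)$: the map $v^\ast\mapsto v^\ast$, $v\mapsto (v,\,d_G(S,v)\bmod 2)$, $v'\mapsto (v,\,(d_G(S,v)+1)\bmod 2)$ identifies the paper's two copies $V,V'$ with the two layers of your bipartite double cover, sending forward edges to within-sheet edges and cross edges to between-sheet edges. So the mathematical content is the same; what differs is the packaging. The paper builds $\G(S)$ from BFS levels of $G^\ast(S)$ (forward versus cross edges relative to $v^\ast$) and then must explicitly delete the second copy $v^{\ast\prime}$ to prevent messages that enter the primed side from leaking back through it---this is precisely the issue you flag as your ``main obstacle''. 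Your double-cover formulation sidesteps that deletion by doubling only $G$, not $G^\ast(S)$, and attaching a single $v^\ast$; the parity bookkeeping then does the work automatically, and the $S$-independence of the cover is a pleasant side effect.

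One place your write-up is slightly loose: your inductive bijection covers only messages between double-cover vertices, so as stated it does not by itself rule out late messages to or from $v^\ast$. You should either strengthen the inductive hypothesis to include ``no message reaches $v^\ast$ after round~1'', or---simpler, and what the paper effectively does---first invoke the bipartite single-source lemma to conclude that every vertex of $\G(S)$ sends exactly once, outward along BFS levels, so $(s,0)$ never sends back to $v^\ast$. For the eccentricity bound, the pivot you are looking for is exactly the paper's ``cross edge closest to the target'': pick an edge $xy$ of $G$ with $d_G(S,x)=d_G(S,y)$ minimizing $\min(d_G(x,v),d_G(y,v))$; the walk $S\leadsto x\to y\leadsto v$ (shortest paths on both ends) then has the opposite parity to $d_G(S,v)$ and length at most $d(S,V)+1+Diam(G)$, which is what you need.
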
 

Whether \SynFl sends $\abs{E}$ or $2\abs{E}$ messages depends on
whether $G^\ast(S)$ is bipartite or not. This graph consists of $G$
and an additional node $v^\ast$ connected to all nodes of $S$.

The bound $\FL_G(S) \le d_G(S,V) +1 + Diam(G)$ coincides with the bound of
\cite{Hussak:2020} for the case $\abs{S}=1$. To see that the bound is
sharp consider a graph $G$ with a single cross edge $(v,w)$ such that
$d_G(v,S)=d_G(w,S)=Diam(G)$ (see Fig.~\ref{fig:sharp}). Apart from
proving upper bounds for $\FL_G(S)$ using other graph parameters, we
focus on lower bounds for $\FL_k(G)$. The auxiliary graph $\G(S)$
allows to derive upper and lower bounds for $\FL_k(G)$ with respect to
$r_k(G)$ and $r_k^{ni}(G)$. Theorem~\ref{theo:mainRes} characterizes
in addition configurations with short termination times. The stated
bounds are sharp as demonstrated by examples.
\begin{figure}[h]%
  \hfill
    \includegraphics[valign=t]{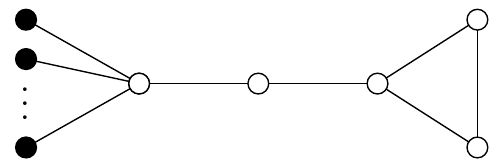}
  \hfill\null
  \caption{A graph with $\FL_G(S) = d_G(S,V) +1 + Diam(G)$ (nodes in S
    are depicted in black).\label{fig:sharp}}
\end{figure}



\begin{theorem}\label{theo:mainRes}
  Let $G=(V,E)$ be a connected graph. 
  \begin{enumerate}[(1)]
  \item If $k>1$ then $r_k(G)\le \FL_k(G)\le r^{ni}_k(G) +1 \le
    r_{\lfloor k/2\rfloor}(G)+1$.\label{theo:mainRes1}
  \item $\FL_k(G)=1$ if and only if $n=k$ or $G$ is bipartite with
    $|V_1|=k$ or $|V_2|=k$.\label{theo:mainRes2}
  \item $\FL_k(G)\le 2$ if $k\ge 2n/3$ or
    $\delta(G)\ge 3$ and $k\ge n/2$.\label{theo:mainRes3}
  \item $\FL_k(G) \le 3$ if $k\ge n/2$.\label{theo:mainRes4}
  \end{enumerate}
\end{theorem}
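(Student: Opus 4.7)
The plan is to handle the four items of Theorem~\ref{theo:mainRes} in order, using Theorem~\ref{theo:fundametal} as the main technical engine and a direct combinatorial case analysis where that is not enough.

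For (1), the lower bound $r_k(G)\le\FL_k(G)$ is immediate: information cannot traverse more than one edge per round, so $\FL_G(S)\ge d_G(S,V)$ for every $S$ of size $k$, and minimising over $S$ gives $r_k(G)$. The middle bound $\FL_k(G)\le r_k^{ni}(G)+1$ is the interesting one: I pick $S$ to be a non-isolated $k$-center, giving $d_G(S,V)=r_k^{ni}(G)$, and invoke Theorem~\ref{theo:fundametal} in the form $\FL_G(S)=\FL_{\G(S)}(v^\ast)-1$, which for bipartite $\G(S)$ equals $\epsilon_{\G(S)}(v^\ast)-1$. The non-isolation of $S$ lets us improve on the generic bound $d_G(S,V)+1+Diam(G)$: for every $s\in S$ there is a partner $s'\in S$ adjacent to $s$, so the triangle $v^\ast s s'$ in $G^\ast(S)$ supplies, alongside the direct edge $v^\ast\to s$, the walk $v^\ast\to s'\to s$ of length $2$, and thus every $u\in V$ is joined to $v^\ast$ by walks of both parities of lengths $1+d_G(u,S)$ and $2+d_G(u,S)$. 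In the bipartite auxiliary graph $\G(S)$, which encodes parity, this yields $\epsilon_{\G(S)}(v^\ast)\le r_k^{ni}(G)+2$ and hence the claimed bound. The final inequality $r_k^{ni}(G)\le r_{\lfloor k/2\rfloor}(G)$ is elementary: starting from a $\lfloor k/2\rfloor$-center $U$, pair each $u\in U$ with any $G$-neighbour, then pad up to size $k$ by adjoining further neighbours of already-chosen vertices, keeping $G[S]$ isolation-free and not increasing $d_G(S,V)$.

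For (2), I would observe that $\FL_G(S)=1$ is equivalent to no vertex being triggered to send an actual message in round~$2$, which, by inspection of the algorithm, amounts to $N(v)\cap S\in\{\emptyset,N(v)\}$ for every $v\in V$. Classifying each $v$ by whether $v\in S$ and whether $N(v)\subseteq S$ or $N(v)\cap S=\emptyset$ gives four types; an edge-by-edge check shows that the only admissible edges fall into three pairwise non-adjacent blocks: inside $\{v\in S:N(v)\subseteq S\}$, between $\{v\in S:N(v)\cap S=\emptyset\}$ and $\{v\notin S:N(v)\subseteq S\}$, and inside $\{v\notin S:N(v)\cap S=\emptyset\}$. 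Connectedness of $G$ then forces the whole of $V$ to lie in a single block, yielding precisely the two advertised configurations $S=V$ or a bipartition with $S$ one side. The converse is a single-round verification.

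For (3), part~(1) reduces the claim to $r_k^{ni}(G)\le 1$, which is equivalent to the existence of a total dominating set of size $k$ (extend a minimum total dominating set one vertex at a time, each addition being a dominated vertex and therefore adjacent to the existing set). The two hypotheses then invoke the standard bounds $\gamma_t(G)\le 2n/3$ for connected graphs with $n\ge 3$ (Cockayne--Dawes--Hedetniemi) and $\gamma_t(G)\le n/2$ for $\delta(G)\ge 3$ (Archdeacon et al.), settling the two cases. For (4), part~(1) again reduces the task to showing $r_k^{ni}(G)\le 2$ whenever $k\ge\lceil n/2\rceil$ with $k\ge 2$, i.e.\ to exhibiting $S\subseteq V$ of size $k$ with $G[S]$ isolation-free and $d_G(V,S)\le 2$. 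I would start from a maximum matching $M$ of $G$, with $V(M)$ the matched vertices and $U=V\setminus V(M)$ the unmatched set, which is independent with $N(U)\subseteq V(M)$. If $k\ge 2|M|$, the set $S=V(M)\cup U'$ for $U'\subseteq U$ of size $k-2|M|$ already has $G[S]$ isolation-free (matched mates, and each $u\in U'$ adjacent to $V(M)\subseteq S$) and achieves $d_G(V,S)\le 1$. The delicate case is $k<2|M|$: $S$ must then be a proper subset of $V(M)$, and the matching pairs included in $S$ must be chosen so that every unselected matched vertex and every vertex of $U$ lies at distance at most $2$ from $S$. This is where I expect the bulk of the work to lie; I would tackle it by contracting each matching edge to a super-vertex, reducing the problem to distance-$2$ dominating the contracted graph with $\lfloor k/2\rfloor$ super-vertices, and using the hypothesis $k\ge n/2$ via a greedy BFS-level argument to produce such a selection, with a small parity adjustment when $k$ is odd.
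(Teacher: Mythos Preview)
Your treatment of parts (1)--(3) is correct and matches the paper's argument closely. The four-type classification in (2) is a tidy repackaging of the paper's $S_1$-analysis, and (3) is identical to the paper's use of the Cockayne--Dawes--Hedetniemi and Archdeacon et al.\ bounds. One small point in (1): for the parity argument to give $\epsilon_{\G(S)}(v^\ast)\le r_k^{ni}(G)+2$ you implicitly need the path from the chosen $s\in S$ to $u$ to avoid cross edges of $G^\ast(S)$, so that its copy lies entirely in $\F_S'(v^\ast)$; this is guaranteed by taking a BFS path from $v^\ast$ through $s$, exactly as the paper does, but you should say so.

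For part (4) the reduction to $r_k^{ni}(G)\le 2$ is the same as the paper's, but your proposed proof of that inequality diverges and has a gap. The paper does not use matchings at all: it observes that $r_k^{ni}$ is monotone under edge deletion (as long as connectivity is preserved), reduces to a spanning tree, strips the leaves to get a smaller tree $T'$, and then applies a known bound $\gamma_t(T')\le n/2$ (Chellali) to obtain a total dominating set of $T'$ of size at most $n/2$, which distance-$2$ dominates $T$. Your matching approach handles $k\ge 2|M|$ cleanly, but in the case $k<2|M|$ the contraction step is problematic: contracting a matching edge can strictly shorten distances, so distance-$2$ domination of the contracted graph does \emph{not} yield distance-$2$ domination of $G$. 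Concretely, if $\{x,y\}\in M$ and $a$--$x$--$y$--$c$ is a path in $G$, then after contraction $a$ and $c$ are at distance $2$ from each other, yet $d_G(a,c)=3$; selecting the super-vertex for $\{a,\cdot\}$ need not put $c$ within distance $2$ in $G$. The ``greedy BFS-level'' phrase does not address this, and I do not see an easy repair that stays within the matching framework. The paper's spanning-tree route sidesteps the issue with a one-line monotonicity argument and an off-the-shelf tree bound; I would recommend adopting it.
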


Amnesiac flooding takes considerably more time when the graph contains
odd cycles, i.e., is non-bipartite. The following theorem
characterizes non-bipartite for which \SynFl terminates after two
rounds. It also provides a lower bound for the termination time with
respect to the radius of $G$ and $k$ and shows that $\FL_{k}(G)$ is
monotonically decreasing.

\begin{theorem}\label{theo:nonbip}
  Let $G=(V,E)$ be a connected, non-bipartite graph. Then
  \begin{enumerate}[(1)]
  \item $\FL_k(G) \ge r_k(G)+1$.\label{theo:nonbip1}
  \item If $k>1$ then $\FL_k(G)= 2$ if and only if
    $r^{ni}_k(G)=1$.\label{theo:nonbip2}
  \item $\FL_k(G) \ge Rad(G)/k +1/2$ and
    $\FL_{k+1}(G)\le \FL_k(G)$.\label{theo:part3}
  \end{enumerate}
\end{theorem}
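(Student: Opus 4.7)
The plan is to lean on the auxiliary bipartite graph $\G(S)$ with distinguished vertex $v^\ast$, together with the identity $\FL_G(S)=\FL_{\G(S)}(v^\ast)-1=\epsilon_{\G(S)}(v^\ast)-1$ from Theorem~\ref{theo:fundametal} (the second equality because single-source flooding in a bipartite graph terminates after exactly the eccentricity of the source). Three separate arguments then handle the three items.

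For part~\eqref{theo:nonbip1}, I would show $\epsilon_{\G(S)}(v^\ast)\ge d_G(S,V)+2$ for every $S$. Because $G$ is non-bipartite, $\G(S)$ is a genuine bipartite cover of $G^\ast(S)$: each $v\in V$ has two distinct preimages, one reachable from $v^\ast$ by even-parity walks in $G^\ast(S)$ and the other only by odd-parity walks. The shortest walk of the ``wrong'' parity is strictly longer than $d_G(S,v)+1$ by at least one edge; choosing $v$ to be a vertex at distance $r_k(G)$ from an optimal $k$-center $S$ then places the far preimage at distance at least $r_k(G)+2$ from $v^\ast$. Minimising over $|S|=k$ yields $\FL_k(G)\ge r_k(G)+1$.

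For part~\eqref{theo:nonbip2}, one direction combines the upper bound $\FL_k(G)\le r^{ni}_k(G)+1=2$ from Theorem~\ref{theo:mainRes}\eqref{theo:mainRes1} with Theorem~\ref{theo:mainRes}\eqref{theo:mainRes2}, which forbids $\FL_k(G)=1$ for non-bipartite $G$ (the only surviving case $k=n$ is excluded because it forces $r^{ni}_k(G)=0$). For the converse, fix $S$ with $|S|=k$ and $\FL_G(S)=2$, and let $T=\{u\in V:N(u)\cap S\neq\emptyset\}$ be the set of round-$2$ senders of algorithm~\SynFl. A direct inspection of rounds $1$--$3$ shows that no sends in round~$3$ is equivalent to $N(v)\subseteq T$ for every $v\notin S$. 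A short case analysis then rules out both an $s\in S$ that is isolated in $G[S]$ (a neighbour $v\in N(s)\setminus S$ would have $s\in N(v)\setminus T$) and a vertex $w\in V\setminus T$ (the inclusion forces $N(w)\subseteq S$ and $N(w)\cap S=\emptyset$, hence $w$ isolated, contradicting connectedness). Thus $G[S]$ has no isolated vertex and $T=V$, witnessing $r^{ni}_k(G)=1$.

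For part~\eqref{theo:part3}, the lower bound combines part~\eqref{theo:nonbip1} with a geodesic-covering pigeonhole: fixing an optimal $k$-center $S$ and a shortest path $P$ from a central node to a farthest vertex (so $|P|=Rad(G)+1$), each ball $B(s_i,r_k(G))$ meets $P$ in indices lying in a window of width at most $2r_k(G)+1$, because any two $P$-vertices within distance $r_k(G)$ of the same centre are within distance $2r_k(G)$ along the geodesic. Pigeonhole gives $k(2r_k(G)+1)\ge Rad(G)+1$ and hence $\FL_k(G)\ge r_k(G)+1\ge Rad(G)/(2k)+1/2$; sharpening the covering step by appealing to the bipartite-cover structure of $\G(S)$ (which forces both preimages of a far vertex to be reached from $v^\ast$, effectively doubling the length that must be covered) delivers the stronger constant $Rad(G)/k+1/2$. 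For the monotonicity $\FL_{k+1}(G)\le\FL_k(G)$, take $S$ achieving $\FL_k(G)$ and any $v\notin S$: the graph $\G(S\cup\{v\})$ is obtained from $\G(S)$ by attaching $v^\ast$ to the two preimages of $v$, and extra edges cannot increase an eccentricity, so $\FL_G(S\cup\{v\})\le\FL_G(S)$; passing to the minimum over $(k+1)$-subsets yields the claim.

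The main obstacle is the sharp ``$+2$'' offset in part~\eqref{theo:nonbip1}: it requires extracting from the non-bipartiteness of $G$ a quantitative parity gap between shortest even- and odd-length $v^\ast$-walks to some far vertex, and this gap is exactly what makes part~\eqref{theo:part3} stronger than the plain covering estimate. Once that is pinned down, the remaining items reduce to round-by-round bookkeeping, a geodesic-covering pigeonhole, and edge-monotonicity of eccentricity in $\G$.
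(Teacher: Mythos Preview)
Your outline for part~\eqref{theo:nonbip1} is fine and matches the paper's one-line reduction to the parity gap in $\G(S)$. Two of the remaining arguments, however, have genuine gaps.

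\textbf{Part~\eqref{theo:nonbip2}, converse.} The claimed equivalence ``no sends in round~$3$ $\Leftrightarrow$ $N(v)\subseteq T$ for every $v\notin S$'' is false. The correct characterisation is $N(v)\subseteq T$ for every $v\in R_3=\{v\notin S:N(v)\cap T\neq\emptyset\}$; a vertex $v\notin S$ with $N(v)\cap T=\emptyset$ simply does not receive in round~$3$ and so cannot send, regardless of whether $N(v)\subseteq T$. More importantly, your converse never invokes non-bipartiteness, yet the statement fails for bipartite graphs (the paper notes $\FL_3(P_{12})=2$ while $r^{ni}_3(P_{12})=5$; with $S=\{3,7,11\}$ every node of $S$ is isolated in $G[S]$, so your argument would derive a contradiction that isn't there). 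The paper instead works in $\G(S)$: if $s\in S$ is isolated in $G[S]$, one checks directly that any $v^\ast$--$s'$ path (which must use exactly one cross edge) has length at least~$4$, forcing $\epsilon_{\G(S)}(v^\ast)\ge 4$ and hence $\FL_G(S)\ge 3$. This is where the non-bipartiteness enters: it guarantees $s'$ lies in the component of $v^\ast$.

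\textbf{Part~\eqref{theo:part3}, lower bound.} Your geodesic-covering pigeonhole on a radius-length path only yields $k(2r_k(G)+1)\ge Rad(G)+1$, hence $\FL_k(G)\ge r_k(G)+1\ge Rad(G)/(2k)+1/2$, off by a factor of~$2$. The proposed ``sharpening via the bipartite cover'' is not an argument: reaching both preimages of one far vertex does not double the length of a path that $k$ balls must cover. The paper closes this gap differently. It first passes to a spanning tree $T$ with $r_k(T)=r_k(G)$ (Lemma~\ref{lem:span_tree}); for trees one has $Diam(T)\ge 2Rad(T)-1$, so running the same pigeonhole on a \emph{diameter} path of $T$ gives $k(2r_k(T)+1)\ge Diam(T)+1\ge 2Rad(T)$, i.e.\ $kr_k(G)\ge Rad(T)-k/2\ge Rad(G)-k/2$ (Lemma~\ref{lem:treeradius1}). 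Combined with part~\eqref{theo:nonbip1} this yields $\FL_k(G)\ge Rad(G)/k+1/2$.

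\textbf{Part~\eqref{theo:part3}, monotonicity.} Your edge-monotonicity idea is essentially correct and in fact cleaner than the paper's path-surgery proof of Lemma~\ref{lem:monoton}. One correction: since the copy $(v^\ast)'$ is deleted from $\G(S)$, enlarging $S$ by $v$ adds only \emph{one} edge (from $v^\ast$ to one preimage of $v$), not two. To make the argument rigorous you should note that $\G(S)$ is, up to relabelling, the bipartite double cover of $G^\ast(S)$ with $(v^\ast)'$ removed; the double cover is canonical (independent of the forward/cross classification), so adding the edge $(v^\ast,v)$ to $G^\ast(S)$ really does just add one edge incident to $v^\ast$ in $\G(S)$, and eccentricities cannot increase. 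Non-bipartiteness of $G$ is what keeps both $\G(S)$ and $\G(S\cup\{v\})$ connected, so that the eccentricity is taken over the full vertex set in both cases.
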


The next theorem analyzes the behavior of amnesiac flooding on
bipartite graphs. Somehow surprisingly,
the termination time is at most
two rounds more than the $k$-radius of the graph. For $k\le n/2$ the
difference is at most one round. This is an essential difference to
the case $k=1$ where the termination time always coincides with the
radius. We provide examples showing that the bounds are sharp.

\begin{theorem}\label{theo:special}
  Let $G=(V_1\cup V_2, E)$ be a connected, bipartite graph. If
  $k\ge 1$ then
\begin{enumerate}[(1)]
\item $\FL_k(G)= r_k(G)$ if and only if $G$ has a $k$-center $S$ with
  either  $S\subseteq V_1$ or $S\subseteq V_2$.\label{theo:part2}
\item If $k \le \max(\abs{V_1}, \abs{V_2})$ then
  $\FL_k(G)-r_k(G) \le 1$.\label{theo:special2}
  \item $\FL_k(G)-r_k(G) \le 2$.\label{theo:special3}
  \end{enumerate}
\end{theorem}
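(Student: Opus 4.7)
The plan is to reduce each part of Theorem~\ref{theo:special} to a single-source flooding computation via the auxiliary graph $\G(S)$ of Theorem~\ref{theo:fundametal}, which satisfies $\FL_G(S)=\FL_{\G(S)}(v^\ast)-1$. The central observation is a dichotomy. When $G$ is bipartite and $S$ lies entirely in one color class $V_i$, the natural candidate $G^\ast(S)$ (adjoining $v^\ast$ to $S$, placed in the opposite class) is itself bipartite and can serve as $\G(S)$, so the Hussak--Trehan bipartite identity $\FL_{\G(S)}(v^\ast)=\epsilon_{\G(S)}(v^\ast)$ applies. A mixed $S$ instead produces odd cycles through $v^\ast$ connecting two nodes of the same color via length-two paths combined with odd $G$-paths, triggering the non-bipartite single-source lower bound $\FL_{\G(S)}(v^\ast)\ge\epsilon_{\G(S)}(v^\ast)+1$.

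For part~(\ref{theo:part2}), the ``if'' direction takes a $k$-center $S\subseteq V_i$ and computes $\epsilon_{\G(S)}(v^\ast)=1+\max_{v\in V}d_G(S,v)=1+r_k(G)$, so the bipartite identity yields $\FL_G(S)=r_k(G)$. Combined with the trivial information-propagation bound $\FL_G(T)\ge d_G(T,V)\ge r_k(G)$ valid for any $k$-set $T$, this gives $\FL_k(G)=r_k(G)$. For the converse, if $S$ achieves $\FL_G(S)=r_k(G)$, the chain $r_k(G)=\FL_G(S)\ge d_G(S,V)\ge r_k(G)$ forces $S$ to be a $k$-center; and if $S$ met both classes, the odd-cycle argument applied inside $\G(S)$ would give $\FL_G(S)\ge r_k(G)+1$, contradicting optimality.

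For part~(\ref{theo:special2}), assume without loss of generality $k\le|V_1|$, and let $S_0$ be a $k$-center. I would replace each $v\in S_0\cap V_2$ by an arbitrary neighbor in $V_1$ (which exists by connectivity), then pad the result with arbitrary $V_1$ vertices up to size $k$ (possible since $k\le|V_1|$), obtaining $S'\subseteq V_1$. A single neighbor-replacement increases every distance by at most one, so $d_G(S',V)\le d_G(S_0,V)+1=r_k(G)+1$, and applying part~(1) to $S'$ yields $\FL_G(S')=d_G(S',V)\le r_k(G)+1$. For part~(\ref{theo:special3}), I split on whether $k\le\max(|V_1|,|V_2|)$: if yes, part~(\ref{theo:special2}) gives the stronger bound; otherwise $k>\max(|V_1|,|V_2|)\ge n/2$, and either $k=n$ (so $\FL_k(G)=1$, $r_k(G)=0$) or $n/2<k<n$, where Theorem~\ref{theo:mainRes}(\ref{theo:mainRes4}) gives $\FL_k(G)\le 3$ while any proper $k$-subset of $V$ forces $r_k(G)\ge 1$.

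The main obstacle is the converse of part~(\ref{theo:part2}): establishing strict increase when $S$ mixes color classes requires extracting the precise odd-cycle structure of $\G(S)$ and invoking the non-bipartite single-source lower bound, rather than merely the upper bound of Theorem~\ref{theo:fundametal}. The neighbor-replacement trick for part~(\ref{theo:special2}) is conceptually routine, but duplicate-handling via padding is exactly where the hypothesis $k\le\max(|V_1|,|V_2|)$ is consumed; without it, part~(\ref{theo:special3}) genuinely requires the separate route through Theorem~\ref{theo:mainRes}(\ref{theo:mainRes4}) and the trivial lower bound on $r_k(G)$.
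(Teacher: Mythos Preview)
Your approach to parts~(\ref{theo:special2}) and~(\ref{theo:special3}), and the ``if'' direction of~(\ref{theo:part2}), matches the paper's proof: the neighbor-replacement into a single color class and the case split at $k>\max(|V_1|,|V_2|)\ge n/2$ (the paper splits at $k\ge n/2$, which works for the same reason) are exactly what the paper does. One minor imprecision: in part~(\ref{theo:special2}) you invoke ``part~(\ref{theo:part2})'' to conclude $\FL_G(S')=d_G(S',V)$, but the \emph{statement} of~(\ref{theo:part2}) concerns $\FL_k$ and $r_k$, not a fixed set. What you actually need---and what the proof of the ``if'' direction establishes---is the pointwise identity $\FL_G(T)=d_G(T,V)$ whenever $T$ lies in one color class.

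The converse of~(\ref{theo:part2}) contains a genuine confusion. You write that a mixed $S$ ``trigger[s] the non-bipartite single-source lower bound $\FL_{\G(S)}(v^\ast)\ge\epsilon_{\G(S)}(v^\ast)+1$.'' This inequality is false: by construction $\G(S)$ is \emph{always} bipartite (the argument of Lemma~\ref{lem:bipartite} carries over unchanged), so $\FL_{\G(S)}(v^\ast)=\epsilon_{\G(S)}(v^\ast)$ for every $S$. The odd cycles you correctly identify live in $G^\ast(S)$, not in $\G(S)$, and you cannot simply transfer the non-bipartite lower bound from $G^\ast(S)$ to $G$ because the paper explicitly notes that $\FL_G(S)$ is not in general $\FL_{G^\ast(S)}(v^\ast)-1$. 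The correct mechanism is this: when $G^\ast(S)$ has a cross edge (equivalently, is non-bipartite), the second copy $V'$ inside $\G(S)$ becomes reachable from $v^\ast$; since $w$ and $w'$ lie in opposite color classes of $\G(S)$, a parity-plus-projection argument gives $d_{\G(S)}(v^\ast,w')\ge d_G(S,w)+2$, hence $\epsilon_{\G(S)}(v^\ast)\ge d_G(S,V)+2$ and $\FL_G(S)\ge d_G(S,V)+1>r_k(G)$. The paper runs essentially this argument contrapositively: it asserts that $\FL_G(S)=r_k(G)$ forces $G^\ast(S)$ to have no cross edge with respect to $v^\ast$, and then a short connectivity argument shows $S$ sits in one class. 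Your intuition is right, but the displayed inequality must be replaced by the eccentricity comparison $\epsilon_{\G(S)}(v^\ast)\ge\epsilon_{G^\ast(S)}(v^\ast)+1$, not a bound on $\FL_{\G(S)}(v^\ast)$ against $\epsilon_{\G(S)}(v^\ast)$.
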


Finally we establish that $(k,c)$-flooding is NP-complete.

\begin{theorem}\label{theo:NP}
  The $(k,c)$-flooding problem is NP-complete.
\end{theorem}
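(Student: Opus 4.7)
The plan is to establish two things: (i) that $(k,c)$-flooding belongs to NP, and (ii) that it is NP-hard via a reduction from the total dominating set decision problem.

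\emph{Membership in NP.} A certificate is a set $S\subseteq V$ with $|S|=k$. Given $S$, I would simulate Algorithm~\SynFl round by round. By Theorem~\ref{theo:fundametal}, the algorithm terminates in at most $d_G(S,V)+1+Diam(G)\le 2n$ rounds and exchanges at most $2\abs{E}$ messages in total; hence the simulation, together with the check $\FL_G(S)\le c$, runs in time polynomial in $\abs{V}+\abs{E}+\log c$.

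\emph{Reduction.} The driving observation is the chain of equivalences obtained by combining Theorem~\ref{theo:nonbip}(\ref{theo:nonbip2}) with the remark in Section~\ref{sec:notation} that $\gamma_t(G)\le k \iff r^{ni}_k(G)=1$: for every connected non-bipartite graph $G$ and every $k>1$,
\[
  \FL_k(G)\le 2 \ \iff\ r^{ni}_k(G)=1 \ \iff\ \gamma_t(G)\le k,
\]
where $\FL_k(G)\ge 2$ in the non-bipartite case follows from Theorem~\ref{theo:nonbip}(\ref{theo:nonbip1}) (since $r_k(G)\ge 1$ as soon as $k<n$, and the remaining case $k=n$ is trivial). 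So, restricted to connected non-bipartite inputs and $c=2$, the $(k,c)$-flooding problem \emph{is} the total dominating set problem.

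\emph{From total dominating set to $(k,c)$-flooding.} Given an instance $(H,k)$ of the total dominating set problem, which is well known to be NP-complete, I would output the $(k',c)$-flooding instance with $c:=2$ and $G:=H$, $k':=k$ whenever $H$ is already connected and non-bipartite. If $H$ is bipartite, I would paste a small non-bipartite gadget onto $H$---for instance a triangle $\{t_1,t_2,t_3\}$ joined to an arbitrary vertex $v\in V(H)$ by the edge $t_1v$---and shift $k$ by a fixed constant to compensate for the triangle vertices that are unavoidably forced into any TDS. A routine case split on $D\cap\{t_1,t_2,t_3\}$ then relates $\gamma_t(G)$ to $\gamma_t(H)$ by an additive constant depending only on the gadget.

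\emph{Main obstacle.} The most delicate step is the gadget analysis: the number of triangle vertices any TDS of $G$ must include depends on whether $v$ itself can be cheaply absorbed into a minimum TDS of $H$. I would handle this either by using two parallel copies of the gadget (so the additive shift becomes robust to the structure of $H$) or, more cleanly, by invoking the fact that total dominating set is already NP-complete when restricted to connected non-bipartite graphs, in which case the reduction $G:=H$, $k':=k$, $c:=2$ works verbatim and the gadget can be avoided entirely.
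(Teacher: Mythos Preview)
Your proposal follows essentially the same route as the paper: membership in NP is immediate, and hardness is by reduction from total domination via the equivalence $\FL_k(G)=2 \iff r^{ni}_k(G)=1$ for connected non-bipartite $G$ (Theorem~\ref{theo:nonbip}(\ref{theo:nonbip2})). The one point of divergence is the gadget. Your instinct that the bare triangle is problematic is correct: if the attachment vertex $v$ happens to lie in some minimum total dominating set of $H$, then $\{t_1\}$ alone can cover $t_2,t_3$ while $v$ covers $t_1$, so the additive shift in $\gamma_t$ is only $+1$ rather than $+2$. The paper sidesteps this by attaching a five-vertex gadget $\{a,b,c,d,e\}$ designed so that every total dominating set of the extended graph can be normalised to contain exactly $\{b,c\}$ from the gadget and nothing else, yielding a clean $\gamma_t(G')=\gamma_t(G)+2$ regardless of the structure near $v$. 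Your alternative fallback---citing NP-completeness of total domination already on connected non-bipartite graphs and skipping the gadget---is also sound and arguably cleaner than either gadget construction.
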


\section[The Case |S|=1]{The Case $|S|=1$}
\label{sec:s1}
We begin the introduction of the auxiliary graph for the case
$\abs{S}=1$. With the help of a further graph extension technique we
will in Section~\ref{sec:several-sources} reduce the case $\abs{S}>1$
to this case. Let $S=\{v_0\}$ with $v_0\in V$. The way messages are
forwarded by algorithm \SynFl implies that a message can arrive
multiple times at a node. Clearly, the first time that a message
arrives at a node is along the shortest path from $v_0$ to this node.
The following lemma is easy to prove.
\begin{lemma}\label{lem:levels}
  In round $i>0$ of algorithm \SynFl each node $v$ with
  $d_G(v_0,v)=i-1$ sends a message to all neighbors $u$ with
  $d_G(v_0,u)=i$.
\end{lemma}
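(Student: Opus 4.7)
\textbf{Plan of proof for Lemma~\ref{lem:levels}.}
The plan is to argue by induction on the round number $i$, using alongside the main claim a companion statement that controls when faraway nodes first become active. Concretely, I would prove simultaneously:
\begin{enumerate}[(a)]
\item In round $i$, every node $v$ with $d_G(v_0,v)=i-1$ sends the message to every neighbor $u$ with $d_G(v_0,u)=i$.
\item At the start of round $i$, no node $u$ with $d_G(v_0,u)\ge i$ has yet received any message (equivalently, no such node has sent anything in any round $\le i-1$).
\end{enumerate}
Statement (b) is the ``nothing travels faster than shortest paths'' property that guarantees forward neighbors are silent at the right moment; it is needed to run (a).

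For the base case $i=1$, statement (a) is immediate from the first line of Algorithm~\ref{alg:fl}: the unique node at distance $0$ from $v_0$ is $v_0$ itself, and in round $1$ it sends to every neighbor, all of which lie at distance exactly $1$. Statement (b) for $i=1$ is trivial since no message has been sent before round $1$.

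For the inductive step, assume both (a) and (b) hold for all rounds smaller than $i$, and fix $v$ with $d_G(v_0,v)=i-1$. Pick a predecessor $w$ of $v$ on a shortest $v_0\,$--$\,v$ path, so $d_G(v_0,w)=i-2$ and $w\in N(v)$. By (a) applied to round $i-1$, the node $w$ sends the message to $v$ in round $i-1$, so $v$ receives at least one message at the start of round $i$. Consequently the set $M$ computed in Algorithm~\ref{alg:fl} is strictly smaller than $N(v)$, and $v$ does forward in round $i$. Now let $u\in N(v)$ be any forward neighbor, i.e.\ $d_G(v_0,u)=i$. By (b) applied to round $i-1$, the node $u$ has not sent any message in round $i-1$, so $u$ is not removed from $M$; hence $v$ sends to $u$, proving (a) for round $i$. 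For (b) in round $i+1$, observe that any sender in round $i$ must have received in round $i$, hence lies at distance $\le i-1$ (by (b) for round $i$), so the new receivers at the start of round $i+1$ are neighbors of such senders and lie at distance $\le i$.

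The argument is essentially bookkeeping on rounds versus distances, and I do not expect any real obstacle; the only subtle point, and the reason for carrying (b) along, is to certify that a forward neighbor $u$ of $v$ has not prematurely become active and thereby been deleted from $M$ in round $i$. Everything else is a direct unfolding of the algorithm's pseudocode.
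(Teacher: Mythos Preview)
Your argument is correct in substance, and since the paper provides no proof at all for this lemma (it merely remarks that it ``is easy to prove''), there is no alternative approach to compare against; an induction on the round number carrying along the auxiliary statement that messages cannot outrun shortest paths is exactly what the authors evidently have in mind.

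One small bookkeeping point is worth tightening. To show (a) in round $i$ you need that the forward neighbor $u$ with $d_G(v_0,u)=i$ did not \emph{send} in round $i-1$; that is precisely (b) at index $i$ (nodes at distance $\ge i$ have sent nothing in rounds $\le i-1$), not (b) at index $i-1$ as you cite it. Since your stated hypothesis only gives (b) for rounds $<i$, the cleanest fix is to establish (b) for round $i$ first in the inductive step---it follows from (b) for round $i-1$ alone, because every neighbor of a node at distance $\ge i$ has distance $\ge i-1$ and hence has not sent in any round $\le i-2$, so no message can arrive at such a node in round $i-1$---and only then prove (a) for round $i$. The same remark applies to your closing sentence, where you invoke ``(b) for round $i$'' to derive (b) for round $i+1$; once you reorder as above, that instance is already available and the induction closes without circularity.
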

Edges that do not belong to a shortest path don't affect the flow of
messages along shortest paths, but they can provoke additional
messages. Such edges are cross edges with respect to a
breadth-first-search starting in $v_0$. Since bipartite graphs have no
cross edges Lemma~\ref{lem:levels} can be strengthened for bipartite
graphs. The proof of the next lemma is by induction on $i$.

\begin{lemma}\label{lem:bilevels}
  Let $G$ be a bipartite graph and $v$ a node with $d_G(v_0,v)=i$. In
  round $i+1$ of algorithm \SynFl node $v$ sends a message to all neighbors
  $u$ with $d_G(v_0,u)=i+1$ and to no other neighbor. In all other
  rounds $v$ does not send a message.
\end{lemma}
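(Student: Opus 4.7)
The plan is to prove the lemma by induction on the round index $r$, using the slightly strengthened hypothesis: \emph{in round $r$ of \SynFl, a node $w$ sends a message if and only if $d_G(v_0,w)=r-1$, and in that case it sends to exactly those neighbors $u$ with $d_G(v_0,u)=r$.} Setting $r=i+1$ recovers the lemma. The key structural fact I rely on is that in a bipartite graph the BFS from $v_0$ has no cross edges: for every edge $(u,w)\in E$, the distances $d_G(v_0,u)$ and $d_G(v_0,w)$ differ by exactly $1$, because an edge between two nodes at the same BFS level would close an odd cycle.

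For the base case $r=1$, only $v_0$ sends (by the definition of \SynFl), and bipartiteness guarantees that every neighbor of $v_0$ lies at distance exactly $1$. For the inductive step, fix a node $v$ and determine which messages it receives in round $r$. By the inductive hypothesis, the only nodes sending in round $r$ are those at distance $r-1$, and each sends only to its distance-$r$ neighbors. Hence $v$ receives at least one message in round $r$ iff $d_G(v_0,v)=r$, and in that case it receives precisely from its neighbors at distance $r-1$, a non-empty set since any shortest path from $v_0$ to $v$ supplies such a predecessor. If $d_G(v_0,v)\neq r$ then $M=N(v)$ and $v$ is silent in round $r+1$, as required. Otherwise, by the no-cross-edge property, the neighbors of $v$ split into those at distance $r-1$ (from which $v$ just received) and those at distance $r+1$ (which did not send in round $r$ by the inductive hypothesis). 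Hence the set $M$ computed by $v$ in round $r+1$ equals exactly the neighbors at distance $r+1$, and $v$ forwards the message to all of them, completing the step.

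The only mildly subtle point, and the natural spot to be careful, is ensuring that the distance-$(r+1)$ neighbors of $v$ do not send to $v$ in round $r$; this is immediate from the inductive hypothesis, which pins down not only \emph{which} nodes send but also \emph{to whom}. The no-cross-edge property of bipartite BFS is what makes the induction this clean, and it is precisely this property that fails in the non-bipartite case and necessitates the more involved auxiliary-graph construction of the subsequent sections.
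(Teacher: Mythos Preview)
Your proof is correct and follows exactly the approach the paper indicates (``by induction on $i$''): you induct on the round index, use that bipartiteness forbids cross edges so every neighbor of a distance-$r$ node lies at distance $r\pm 1$, and conclude that the only forwarding is from layer $r$ to layer $r+1$. The one cosmetic point is that in the paper's model a message sent in round $r$ is received and processed in round $r+1$, so your phrase ``receives in round $r$'' should read ``receives in round $r+1$''; this is purely an indexing convention and does not affect the argument.
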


\begin{corollary}\label{cor:bipartite}
  If $G$ is bipartite then $\FL_G(v_0)=\epsilon_G(v_0)$ and
  $\FL_1(G)=Rad(G)$.
\end{corollary}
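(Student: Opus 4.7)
The plan is to derive both equalities as essentially immediate consequences of Lemma~\ref{lem:bilevels}, which already pins down the sending activity on a bipartite graph round by round. The corollary has two claims, and I would treat them in sequence.

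For the first claim, $\FL_G(v_0)=\epsilon_G(v_0)$, I would begin by observing that Lemma~\ref{lem:bilevels} gives a complete characterization: a node $v$ with $d_G(v_0,v)=i$ transmits only in round $i+1$, and then only to its neighbors at distance $i+1$ from $v_0$. So determining the termination round reduces to finding the largest round in which some node actually sends. A node at distance $i$ has at least one neighbor at distance $i+1$ exactly when $i+1\le\epsilon_G(v_0)$. Consequently, the nodes at distance $\epsilon_G(v_0)-1$ lying on a shortest path to a furthest node from $v_0$ do send in round $\epsilon_G(v_0)$, while any node at distance $\epsilon_G(v_0)$ has no neighbor further away and hence transmits to nobody in round $\epsilon_G(v_0)+1$. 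Together this shows that round $\epsilon_G(v_0)$ is the last round carrying a message, giving $\FL_G(v_0)=\epsilon_G(v_0)$.

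For the second claim, I would simply take the minimum over $v_0$: by definition $\FL_1(G)=\min_{v_0\in V}\FL_G(\{v_0\})$, and substituting the first part gives $\FL_1(G)=\min_{v_0\in V}\epsilon_G(v_0)=Rad(G)$, which is the defining equation for the radius.

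The main obstacle is essentially nil, since Lemma~\ref{lem:bilevels} has already done the heavy lifting; the only delicate point is the existence statement used for the lower bound, namely that some node at distance $\epsilon_G(v_0)-1$ really does have a neighbor at distance $\epsilon_G(v_0)$. This follows immediately from picking any node $u$ attaining $d_G(v_0,u)=\epsilon_G(v_0)$ and taking its predecessor on a shortest $v_0$–$u$ path, so no extra argument is required.
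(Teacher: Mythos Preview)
Your argument is correct and matches the paper's intent: the corollary is stated immediately after Lemma~\ref{lem:bilevels} without a separate proof, precisely because it follows by the round-by-round analysis you give. The only thing worth noting is that the paper leaves the derivation implicit, so your written-out version (including the lower-bound witness via a predecessor on a shortest path to a farthest node) is exactly the unpacking the reader is expected to supply.
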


To analyze the behavior of \SynFl for non-bipartite graphs we
introduce the most important concept of this work, the auxiliary graph
$\G(v_0)$.

\subsection[The Auxiliary Graph G(v\_0)]{The Auxiliary Graph $\G(v_0)$}
\label{sec:subst-graph}
For a given graph $G$ and a starting node $v_0$ we define the
auxiliary graph $\G(v_0)$. The executions of \SynFl on $G$ and
$\G(v_0)$ are tightly coupled. Since $\G(v_0)$ is bipartite we can
apply Corollary~\ref{cor:bipartite} to compute $\FL_G(v_0)$.

\begin{definition}
  Denote by $\F(v_0)$ the subgraph of $G$ with node set $V$ and all
  edges of $G$ that are not cross edges with respect to $v_0$.
\end{definition}
Obviously $\F(v_0)$ is always bipartite. Fig.~\ref{fig:dag}
demonstrates this definition.

\begin{figure}[h]%
  \hfill
    \includegraphics{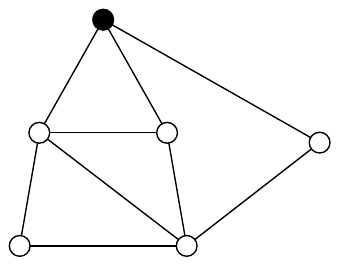}
  \hfill
    \includegraphics{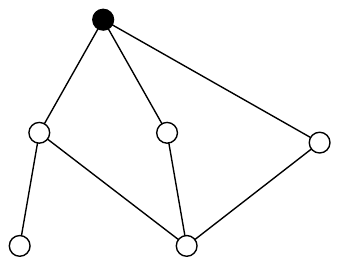}
  \hfill\null
  \caption{On the left the graph $G$ with $v_0$ marked and $\F(v_o)$
    on the right.\label{fig:dag}}
\end{figure}

\begin{definition}
  Denote by $\G(v_0)$ the graph that consists of two copies of
  $\F(v_0)$ with node sets $V$ and $V'$ and additionally for any cross
  edge $(u,w)$ of $G$ the edges $(u,w')$ and $(w,u')$.
\end{definition}
In the following we denote for every $v\in V$ the copy of $v$ in $V'$
by $v'$. $\G(v_0)$ consists of $2\abs{V}$ nodes and $2\abs{E}$ edges.
Every additional edge connects a node from $V$ with a node
from $V'$. Fig.~\ref{fig:trans} demonstrates this construction. For
each $v\in V$ we have
$deg_G(v) = deg_{\G(v_0)}(v) = deg_{\G(v_0)}(v')$. Furthermore,
$d_{\G(v_0)}(v_0,v)=d_{\F(v_0)}(v_0,v)=d_G(v_0,v)$ for $v\in V$.

\begin{figure}[h]%
  \hfill
    \includegraphics[valign=t]{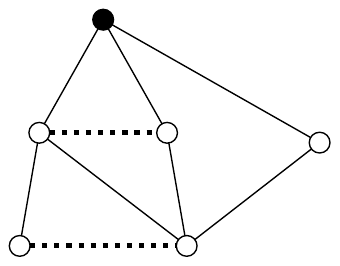}
  \hfill
    \includegraphics[valign=t]{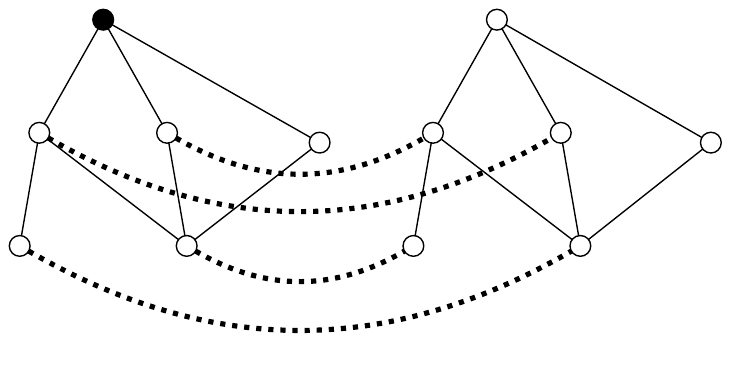}
  \hfill\null
  \caption{The graph $G$ on the left has two cross edges (dotted
    lines), $\G(v_0)$ is shown on the right. \label{fig:trans}}
\end{figure}

\begin{lemma}
   $\G(v_0)$ is bipartite.\label{lem:bipartite}
 \end{lemma}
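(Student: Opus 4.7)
My plan is to exhibit an explicit $2$-coloring of $\G(v_0)$ based on the BFS layering induced by $v_0$. For this, I would first partition $V$ by the parity of the distance to $v_0$, setting
\[ A_{e} = \{v \in V \mid d_G(v_0,v) \text{ is even}\}, \qquad A_{o} = \{v \in V \mid d_G(v_0,v) \text{ is odd}\},\]
and let $A_e', A_o'$ denote the corresponding subsets of the copy $V'$. Since every non-cross edge of $G$ joins a node of level $i$ to a node of level $i+1$, the pair $(A_e, A_o)$ is already a bipartition of $\F(v_0)$.

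Next I would propose the bipartition of $\G(v_0)$ given by
\[ C_1 = A_e \cup A_o', \qquad C_2 = A_o \cup A_e'.\]
The proof reduces to checking that each edge of $\G(v_0)$ has its endpoints in different classes. For edges inside the $V$-copy of $\F(v_0)$, this is immediate from the choice of $A_e, A_o$. For edges inside the $V'$-copy, one endpoint lies in $A_e' \subseteq C_2$ and the other in $A_o' \subseteq C_1$, so this is symmetric.

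The only case that requires the shift is that of the added edges coming from cross edges. Here the key observation is that if $(u,w)$ is a cross edge of $G$, then $d_G(v_0,u) = d_G(v_0,w)$, so $u$ and $w$ have the same parity: either both lie in $A_e$ or both in $A_o$. In the first case the added edges $(u,w')$ and $(w,u')$ each connect a node of $A_e \subseteq C_1$ to a node of $A_e' \subseteq C_2$; in the second they connect a node of $A_o \subseteq C_2$ to a node of $A_o' \subseteq C_1$. Thus every edge of $\G(v_0)$ crosses the partition $(C_1, C_2)$, so $\G(v_0)$ is bipartite. The main (and only modest) obstacle is simply to notice that one must swap the parities between the two copies when assembling the two color classes; once this is done, the verification is a one-line check per edge type, and it is exactly the role of the cross edges that forces this swap.
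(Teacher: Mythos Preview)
Your proof is correct and is essentially the same argument as the paper's. The paper also $2$-colors each copy of $\F(v_0)$ (which amounts to your parity classes $A_e,A_o$) and then ``reverses the coloring of the nodes in $V'$'', which is precisely your swap $C_1=A_e\cup A_o'$, $C_2=A_o\cup A_e'$; your version is just more explicit about the coloring and the case check.
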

 \begin{proof}
   $\G(v_0)$ consists of two copies of $\F(v_0)$, with node sets $V$
   and $V'$. $\F(v_0)$ is $2$-colorable.  The cross edges of $G$
   are modified so that one end is in $V$ and the other in $V’$. We
   reverse the coloring of the nodes in $V’$. Since cross edges are
   at the same level in $G$, in the auxiliary graph they connect
   opposite colors. Thus, $\G(v_0)$ is $2$-colorable and,
   hence, bipartite.
 \end{proof}
 



Lemma~\ref{lem:bilevels} and \ref{lem:bipartite} imply the following
result that is frequently used in the following.
\begin{lemma}\label{lem:simple}
  Let $(u,w')$ be an edge of $\G(v_0)$ with $u\in V$ and
  $w'\in V'$. Node $w'$ never sends in $\G(v_0)$ a message to
  $u$ via edge $(u,w')$ but $u$ sends a message via this edge to $w'$.
\end{lemma}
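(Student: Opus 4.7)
The plan is to reduce everything to Lemma~\ref{lem:bilevels} applied to the bipartite auxiliary graph $\G(v_0)$. The edge $(u,w')$ of $\G(v_0)$ corresponds to a cross edge $(u,w)$ of $G$, so by definition $d_G(v_0,u)=d_G(v_0,w)=i$ for some $i\ge 1$. Since $\G(v_0)$ is bipartite (Lemma~\ref{lem:bipartite}), Lemma~\ref{lem:bilevels} will give me complete control over when each node transmits on which edge, provided I can pin down the distances $d_{\G(v_0)}(v_0,u)$ and $d_{\G(v_0)}(v_0,w')$.

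The first distance is immediate from the paragraph following the definition of $\G(v_0)$: $d_{\G(v_0)}(v_0,u)=d_G(v_0,u)=i$. For the second, I will show $d_{\G(v_0)}(v_0,w')=i+1$. The upper bound is free: concatenating a shortest $v_0$–$u$ path inside the $V$-copy with the cross edge $(u,w')$ yields a walk of length $i+1$. For the lower bound I consider any shortest path $P$ from $v_0$ to $w'$ in $\G(v_0)$. Since $v_0\in V$ and $w'\in V'$, $P$ contains at least one edge between the two copies; let $(x,y')$ (with $x\in V$, $d_G(v_0,x)=d_G(v_0,y)=j$) be the last such edge along $P$. After this edge the path stays inside $V'$, where edges come from $\F(v_0)$ and hence only connect nodes of consecutive levels in $G$. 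Therefore the tail of $P$ from $y'$ to $w'$ has length at least $|j-i|$. A short case distinction on $j\le i$ versus $j\ge i$ then yields $|P|\ge j+1+|j-i|\ge i+1$, as required.

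With both distances determined I finish by invoking Lemma~\ref{lem:bilevels} twice. Node $u$ sits at level $i$ and $w'$ is one of its neighbors at level $i+1$, so $u$ sends exactly once, in round $i+1$, and this transmission includes the edge $(u,w')$. Node $w'$ sits at level $i+1$ and sends exactly once, in round $i+2$, only to neighbors at level $i+2$; the neighbor $u$ has level $i\neq i+2$, so $w'$ never sends a message to $u$ through $(u,w')$.

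The only nontrivial step is the distance computation $d_{\G(v_0)}(v_0,w')=i+1$. The parity of the distance is forced by bipartiteness (via the 2-coloring used in Lemma~\ref{lem:bipartite}), so the distance must be of the form $i+1$ or $i-1$; ruling out $i-1$ is exactly what the "last cross edge" argument above accomplishes. Everything else is a direct read-off from Lemmas~\ref{lem:bipartite} and~\ref{lem:bilevels}.
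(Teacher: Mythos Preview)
Your proof is correct and follows exactly the route the paper intends: the paper merely states that the lemma is implied by Lemma~\ref{lem:bilevels} and Lemma~\ref{lem:bipartite}, and you have supplied the missing details, most importantly the computation $d_{\G(v_0)}(v_0,w')=i+1$ via the last-crossing-edge argument. This distance computation is indeed the only nontrivial ingredient (and your argument for it anticipates the content of Corollary~\ref{col:cross}, without circularity).
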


The last lemma has the following simple corollary.

\begin{corollary}\label{col:cross}
  A shortest path in $\G(v_0)$ from $v_0$ to a node $w'\in V'$ uses
  exactly one edge from $V$ to $V'$.
\end{corollary}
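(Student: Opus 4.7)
My plan is to combine Lemma~\ref{lem:simple} (the ``one-way'' property of V-to-V' edges under \SynFl on $\G(v_0)$) with Lemma~\ref{lem:bilevels} applied to the bipartite graph $\G(v_0)$. Lemma~\ref{lem:bilevels} guarantees that in round $i+1$, a node $v$ at distance $i$ from $v_0$ sends only to neighbors at distance $i+1$ and only in this one round. Since Lemma~\ref{lem:simple} asserts that on a V-to-V' edge $(u,w')$ the message travels from $u$ to $w'$ and never back, I can conclude $d_{\G(v_0)}(v_0, w') = d_{\G(v_0)}(v_0, u) + 1$ for every such edge. In effect, every V-to-V' edge is oriented by its BFS-distance from $v_0$, with the $V$-endpoint one level closer than the $V'$-endpoint.

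Next I would invoke the elementary fact that along any shortest path in a graph, the distance from the start strictly increases by one at each step. Combined with the distance identity above, every V-to-V' edge that appears on a shortest path from $v_0$ must be traversed from its $V$-endpoint to its $V'$-endpoint, never in the reverse direction.

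The counting step is then short. A path from $v_0 \in V$ to $w'\in V'$ must cross the bipartition, so it uses at least one V-to-V' edge. If it used two or more, between two consecutive crossings the path would have to return from $V'$ to $V$, which would require traversing some V-to-V' edge in the direction $V' \to V$ — exactly what the previous paragraph forbids. Hence any shortest path uses exactly one V-to-V' edge.

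The only delicate point I foresee is justifying the distance identity from Lemma~\ref{lem:simple}, since that lemma is phrased in terms of the message-sending behavior of \SynFl rather than directly in terms of shortest-path distances. I expect this to be essentially immediate: Lemma~\ref{lem:bipartite} makes $\G(v_0)$ bipartite, and on a bipartite graph Lemma~\ref{lem:bilevels} shows that \SynFl traces out BFS layers exactly, so the asymmetric sending behavior asserted by Lemma~\ref{lem:simple} is equivalent to the endpoints of $(u,w')$ sitting in consecutive BFS layers in the claimed order.
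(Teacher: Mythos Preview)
Your argument is correct and is essentially a fleshed-out version of the paper's own treatment: the paper simply states that the corollary is immediate from Lemma~\ref{lem:simple}, and your proof unpacks precisely why---using Lemma~\ref{lem:bilevels} on the bipartite graph $\G(v_0)$ to translate the one-way sending behavior into the distance identity $d_{\G(v_0)}(v_0,w')=d_{\G(v_0)}(v_0,u)+1$, and then a parity/monotonicity count. One cosmetic remark: when you write ``must cross the bipartition,'' you mean the $V$/$V'$ split (which is not the $2$-coloring bipartition of Lemma~\ref{lem:bipartite}); it would be clearer to say ``must use at least one edge between $V$ and $V'$.''
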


Fig.~\ref{fig:exec} depicts an execution of \SynFl on $\G(v_0)$ for
the graph $G$ shown in Fig.~\ref{fig:trans}. The next lemma shows the
relationship between the execution of \SynFl on $G$ and $\G(v_0)$.

\begin{figure}[h]%
  \hfill
    \includegraphics[valign=t]{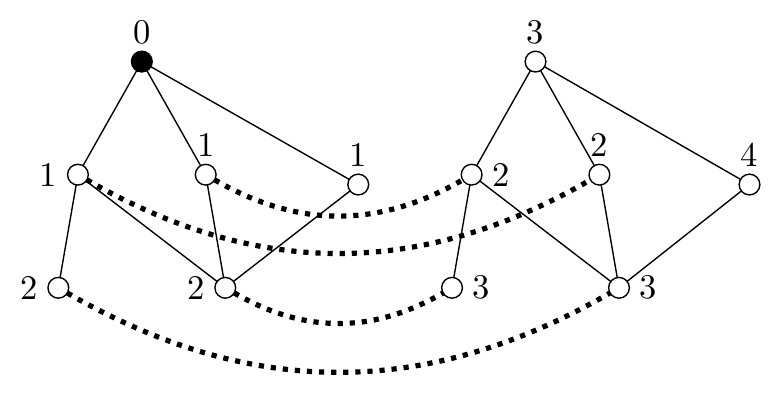}
  \hfill\null
  \caption{The nodes' labels denote the round a message is received by
    the node in $\G(v_0)$.\label{fig:exec}}
\end{figure}

\begin{lemma}\label{lem:equiv}
  Let $v,w\in V$. Node $v$ receives a message from $w$ in round
  $i$ in $G$ if and only if in round $i$ node $v$ receives a message
  from $w$ in $\G(v_0)$, or $v'$ receives a message from $w$ or from
  $w'$ in $\G(v_0)$.
\end{lemma}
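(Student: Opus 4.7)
The plan is to prove Lemma~\ref{lem:equiv} by strong induction on the round number $i$, using the fundamental identity ``$v$ receives from $w$ in round $i$'' $\iff$ ``$w$ sends to $v$ in round $i-1$'' to reduce everything to the behavior of \emph{sends} in each graph. In round $1$ no receptions occur in either graph, and in round $2$ the only sender is $v_0 \in V$, whose out-edges in $\G(v_0)$ coincide with its out-edges in $G$ (since every edge at $v_0$ is a forward edge and therefore lies in the $V$-copy of $\F(v_0)$); cases (b) and (c) of the statement are vacuous because neither $(v_0,v')$ for $v\neq v_0$ nor any edge from $v_0'$ can carry a message in round~$1$, so the base case reduces to the trivial equivalence $(v_0,v)\in E$.

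For the inductive step with $i\ge 3$, I would fix $v,w\in V$ and expand the sending condition at $w$. By the algorithm, $w$ sends to $v$ in round $i-1$ in $G$ iff $w$ received at least one message in round $i-1$ and did \emph{not} receive from $v$ in that round. The inductive hypothesis translates each reception ``$w$ receives from $u$ in round $i-1$ in $G$'' into an equivalent disjunction about receptions at $w$ and $w'$ in $\G(v_0)$ in the same round, so the entire condition on $w$ in $G$ can be rewritten as a condition on the activity of $w$ and $w'$ in $\G(v_0)$. I then split on the type of the edge $(v,w)$ in $G$: if it is a \emph{forward} edge, $\G(v_0)$ contains the two parallel edges $(w,v)$ in $V$ and $(w',v')$ in $V'$, and Lemma~\ref{lem:simple} together with Lemma~\ref{lem:bilevels} implies that the only potential senders on them are $w$ and $w'$, yielding exactly cases (a) and (c); case (b) is vacuous because $(w,v')$ is not an edge of $\G(v_0)$. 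If $(v,w)$ is a \emph{cross} edge, then $\G(v_0)$ contains $(w,v')$ and $(v,w')$, Lemma~\ref{lem:simple} rules out $w'\to v$ and $v'\to w$, so only case (b) can contribute, while (a) and (c) are vacuous because $(w,v)$ and $(w',v')$ are not edges of $\G(v_0)$.

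The main obstacle I anticipate is bookkeeping the ``non-reception'' clause: I need to argue that ``$w$ did not receive from $v$ in round $i-1$ in $G$'' corresponds bijectively, via the inductive hypothesis, to the three dual non-events at $w,w'$ from $v,v'$ in $\G(v_0)$ allowed by Lemma~\ref{lem:simple}, so that the positive and negative conditions line up exactly. The directional restrictions forced by Lemma~\ref{lem:simple} (which kill $w'\to v$ on cross edges and $v\to w$ on forward edges) should make this bijection clean, at which point the equivalence for round $i$ follows directly from the forward/cross case analysis and the induction closes.
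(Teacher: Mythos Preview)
Your approach is sound and, for the converse implication, genuinely different from the paper's. The paper proves the forward direction ($G \Rightarrow \G(v_0)$) by induction essentially as you describe, but for the converse it abandons induction entirely: for case~(a) it argues that a message arriving at $v\in V$ in $\G(v_0)$ never left the $V$-copy (Lemma~\ref{lem:simple}) and therefore travelled along a shortest $G$-path; for case~(b) it observes that a cross edge forces $d_G(v_0,v)=d_G(v_0,w)=i-1$, so in $G$ both nodes send to each other in round $i$; for case~(c) it invokes Corollary~\ref{col:cross} to locate the unique cross edge on a shortest path to $w'$ and then traces the corresponding message through $G$ by explicit distance arithmetic. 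Your uniform inductive treatment avoids these ad-hoc path arguments and is arguably cleaner once completed.

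What your outline does not yet make explicit is the one structural fact you need to close the ``non-reception'' bookkeeping: $w$ and $w'$ lie in \emph{different} colour classes of the bipartite graph $\G(v_0)$ (this is exactly the colour flip in the proof of Lemma~\ref{lem:bipartite}), so by Lemma~\ref{lem:bilevels} at most one of $w,w'$ can receive anything in a given round. This is what rescues the cross-edge case of the forward direction: the induction hypothesis only tells you that \emph{some} reception among $\{w\leftarrow u,\ w'\leftarrow u,\ w'\leftarrow u'\}$ occurred, yet the only admissible target event is $w\to v'$ (your case~(b)); if it were $w'$ that received, $w$ would be silent and none of (a)--(c) would fire. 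The parity observation, combined with the non-reception clause (which yields $w'\not\leftarrow v$ although $v$ is a strictly closer neighbour of $w'$), forces $w'$ to be inactive in that round, so it must be $w$ that received and case~(b) follows. The same observation disposes of the analogous issue in the converse direction that you flagged. Once you state this parity fact up front, the ``bijection'' you anticipate really is clean. (A minor point: your round indexing is off by one relative to the paper --- here a send and its reception carry the same round index --- but this does not affect the substance of the argument.)
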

\begin{proof}
  The proof is by induction on $i$. The Lemma is true for $i=1$. Note
  that from the three conditions for $\G(v_0)$ only the first can
  occur in round $1$. Let $i>1$.

  First suppose that $w$ sends in $G$ a message to $v$ in round $i$.
  Then in $G$ in round $i-1$ node $w$ received a message from a
  neighbor $z$ with $z\not=v$, i.e., no message from $v$. By induction
  in round $i-1$ in $\G(v_0)$ node $v$ did not send a message neither
  to $w$ nor to $w'$ and $v'$ did not send a message to $w'$. Also, in
  round $i-1$ in $\G(v_0)$ node $z$ did send a message to $w$ or $w'$
  or $z'$ did send a message to $w'$. Thus, in $\G(v)$ in round $i$
  either $w$ sends a message to $v$ or $w'$ sends a message to $v$ or
  $v'$.

  Conversely suppose that one of the three events happens in $\G(v_0)$
  in round $i$. First assume that in round $i$ node $v$ received a
  message from $w$ in $\G(v_0)$. By Lemma~\ref{lem:simple} the message
  never left $\F(v_0)$, thus the message reached $w$ via a shortest
  path of $G$. This yields that in $G$, node $v$ also received in $G$
  message from $w$.

  Next suppose $v'$ receives a message from $w$ in $\G(v_0)$ in round
  $i$. Then $(v,w)$ is a cross edge of $G$. Thus,
  $d_{\G(v_0)}(v_0,w) = d_{\G(v_0)}(v_0,v)=i-1$ and hence,
  $d_G(v_0,w) = d_G(v_0,v)=i-1$. This implies that $v$ and $w$ do not
  send messages before round $i$ in $G$ and in round $i$ they send
  messages to each other in $G$, i.e., $v$ received in $G$ message
  from $w$.

  Finally suppose that $v'$ received a message from $w'$ in
  $\G(v_0)$ in round $i$. Lemma~\ref{lem:bilevels} gives
  $i=d_{\G(v_0)}(v_0,v')$ and $d_{\G(v_0)}(v_0,w')=i-1$. Let $P$ be a
  shortest path in $\G(v_0)$ from $v_0$ to $w'$. By
  Corollary~\ref{col:cross} $P$ uses a single cross edge $(x,y')$. Thus,
  in $G$ both nodes $x$ and $y$ receive a message in round
  $d_G(v_0,x)$, but not from each other. Hence, in round
  $d_G(v_0,x)+1$ both nodes $x$ and $y$ again receive a message. After
  another $d_G(y,w)$ rounds, $w$ sends a message to $v$. Thus, in $G$
  node $v$ receives in round
  $d_G(v_0,x)+1+d_G(y,v)= d_{\G(v_0)}(v_0,v')=i$ a message from
  $w$.
\end{proof}

The lemma proves that if no node in $G$ receives a message in a
specific round then no node in $\G(v_0)$ receives a message in this
round and vice versa. This yields the following theorem.
\begin{theorem}
  $\FL_G(v_0)=\FL_{\G(v_0)}(v_0)$ for every $v_0\in V$.\label{theo:aqui}
\end{theorem}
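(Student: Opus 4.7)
The plan is to invoke Lemma~\ref{lem:equiv} directly, extracting a round-by-round equivalence between the two executions. Observe first that $\FL_G(v_0)$ equals the largest round $i$ in which some node of $G$ receives at least one message under \SynFl starting from $v_0$, and analogously for $\FL_{\G(v_0)}(v_0)$; this is well-defined because the set of senders in round $i{+}1$ is determined entirely by the set of receivers in round $i$, so once a round contains no receptions the execution is permanently silent. Hence it suffices to show that, for every $i\ge 1$, some reception happens in $G$ in round $i$ if and only if some reception happens in $\G(v_0)$ in round $i$.

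For the forward direction, suppose in round $i$ some node $v\in V$ receives a message from a neighbor $w$ in $G$. Lemma~\ref{lem:equiv} then yields that at least one of the three listed events occurs in $\G(v_0)$ in the same round $i$, so some node of $\G(v_0)$ receives a message in round $i$. This gives $\FL_G(v_0)\le \FL_{\G(v_0)}(v_0)$.

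For the converse, I would enumerate the possible reception patterns in $\G(v_0)$ as permitted by the construction together with Lemma~\ref{lem:simple}: edges of $\G(v_0)$ are either internal to the copy on $V$, internal to the copy on $V'$, or are cross-copy edges arising from cross edges of $G$, and by Lemma~\ref{lem:simple} the latter transmit only from $V$ to $V'$. Consequently the only possible receptions in $\G(v_0)$ in round $i$ are (a) $v\in V$ from $w\in V$, (b) $v'\in V'$ from $w\in V$, or (c) $v'\in V'$ from $w'\in V'$, and these are exactly the three events appearing in the equivalence of Lemma~\ref{lem:equiv}. In each case the ``if'' direction of that lemma returns that $v$ receives from $w$ in $G$ in round $i$, so some reception occurs in $G$ in round $i$, giving $\FL_{\G(v_0)}(v_0)\le \FL_G(v_0)$.

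The main obstacle is already absorbed into Lemma~\ref{lem:equiv}; at this stage the only care needed is to be sure that the three listed patterns in $\G(v_0)$ are exhaustive, in particular that a node $v\in V$ cannot receive from a node in $V'$. This is exactly what Lemma~\ref{lem:simple} rules out on cross-copy edges, so the enumeration is complete and the equality follows.
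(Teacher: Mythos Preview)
Your proposal is correct and follows essentially the same route as the paper: both derive the theorem directly from Lemma~\ref{lem:equiv} by arguing that some reception occurs in $G$ in round $i$ if and only if some reception occurs in $\G(v_0)$ in round $i$. Your write-up is in fact more careful than the paper's one-line justification, since you explicitly invoke Lemma~\ref{lem:simple} to verify that the three reception patterns listed in Lemma~\ref{lem:equiv} exhaust all possibilities in $\G(v_0)$ (ruling out a node in $V$ receiving from $V'$), a point the paper leaves implicit.
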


Note that if $G$ is bipartite then $\G(v_0)$ is disconnected and the
connected component of $\G(v_0)$ containing $v_0$ is just
$\F_{G(v_0)}$, i.e.,
$\epsilon_{\G(v_0)}(v_0)=\epsilon_{\F_{G(v_0)}}(v_0)=\epsilon_{G}(v_0)$
in this case. Lemma~\ref{lem:bipartite} and Theorem~\ref{theo:aqui}
together with Lemma~\ref{lem:equiv} imply the following result.
\begin{theorem}\label{theo:messages}
  Let $G(V,E)$ be a connected graph. Then
  $\FL_G(v_0)=\epsilon_{\G(v_0)}(v_0)$. Algorithm \SynFl sends $\abs{E}$
  messages if $G$ is bipartite and $2\abs{E}$ otherwise.
\end{theorem}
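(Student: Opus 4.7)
The plan is to stitch together the equivalence between $G$ and the auxiliary graph with the clean behaviour of \SynFl on bipartite graphs. First I would invoke Theorem~\ref{theo:aqui} to rewrite $\FL_G(v_0)=\FL_{\G(v_0)}(v_0)$. Since Lemma~\ref{lem:bipartite} guarantees that $\G(v_0)$ is bipartite, Corollary~\ref{cor:bipartite} applies (on the connected component containing $v_0$, using the convention for disconnected graphs set up in Section~\ref{sec:notation}) and yields $\FL_{\G(v_0)}(v_0)=\epsilon_{\G(v_0)}(v_0)$. Chaining the two equalities gives the first claim of the theorem.

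For the message count, I would treat the two cases separately. If $G$ is bipartite, Lemma~\ref{lem:bilevels} applies directly to $G$: in round $i+1$ every node at distance $i$ from $v_0$ transmits exactly one message to each neighbour at distance $i+1$ and nothing else, in any other round it is silent. Because a bipartite $G$ has no cross edges, every edge is traversed exactly once, giving $\abs{E}$ messages in total.

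If $G$ is non-bipartite, I would transfer the count to $\G(v_0)$. The auxiliary graph is bipartite (Lemma~\ref{lem:bipartite}) with $2\abs{E}$ edges, and it is connected: a BFS tree from $v_0$ in $G$ consists of forward edges, so each of the two copies of $\F(v_0)$ is internally connected, and any cross edge of $G$ (at least one exists since $G$ is non-bipartite) supplies an edge linking the two copies. Applying Lemma~\ref{lem:bilevels} to $\G(v_0)$ then shows that each of its $2\abs{E}$ edges carries exactly one message, so \SynFl on $\G(v_0)$ transmits $2\abs{E}$ messages. Finally, Lemma~\ref{lem:equiv} provides a bijection between message-receive events in $G$ and in $\G(v_0)$: the three alternatives in that lemma are pairwise disjoint (the receiver is either $v$ or $v'$, and when it is $v'$ the identity of the sender $w$ versus $w'$ records whether the traversed edge was a cross edge of $G$ or a $V'$-copy forward edge). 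Hence $G$ also sees $2\abs{E}$ messages.

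The main subtlety I expect is the message-count in the non-bipartite case — specifically verifying that the correspondence supplied by Lemma~\ref{lem:equiv} is a bijection rather than a many-to-one map, which is what turns the edge count of $\G(v_0)$ into a message count for $G$. Once the disjointness of the three alternatives is recorded, everything else reduces to counting edges in $\F(v_0)$ and its $V'$-copy plus the cross edges between them, which yields $\abs{E}+\abs{E}=2\abs{E}$ as required.
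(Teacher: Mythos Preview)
Your plan matches the paper's own argument exactly; the paper's entire proof is the single line ``Lemma~\ref{lem:bipartite} and Theorem~\ref{theo:aqui} together with Lemma~\ref{lem:equiv} imply the following result,'' and you have correctly unpacked what that means (including the use of Corollary~\ref{cor:bipartite} and Lemma~\ref{lem:bilevels} as the natural intermediaries).

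One small point in your bijection argument deserves tightening: recording that the three alternatives are \emph{distinct} $\G(v_0)$-events is not quite the same as showing that at most one of them occurs in a fixed round $i$. Your cross/forward dichotomy handles alternatives~1~vs.~2 and~2~vs.~3, but alternatives~1 and~3 both require $(v,w)$ to be a forward edge, so that argument does not separate them. The missing observation is that $v$ and $v'$ lie in opposite colour classes of the bipartite graph $\G(v_0)$ (this is exactly how the $2$-colouring in the proof of Lemma~\ref{lem:bipartite} is set up), hence $d_{\G(v_0)}(v_0,v)$ and $d_{\G(v_0)}(v_0,v')$ have different parities and $v$, $v'$ can never both receive in the same round. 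With that in hand your bijection is airtight and the $2\abs{E}$ count follows.
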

In case $G$ is non-bipartite for some edges messages are sent in both
directions, while for other edges two messages are sent in one
direction. With the introduced technique the proof of the main result
of \cite{Hussak:2020} becomes very simple.


\begin{theorem}(Theorem 10 and 12, \cite{Hussak:2020})\label{theo:upper_bound}
  Let $G$ be a connected, non-bipartite graph and $v_0\in V$. Then
  $Rad(G) < \FL(v_0) \le \epsilon_{G}(v_0) + Diam(G) + 1$.
  Furthermore, $Rad(G) < \FL_1(G)\le Rad(G)+Diam(G)+1$.
  $Rad(G)=\FL_1(G)$ if and only if $G$ is bipartite.
\end{theorem}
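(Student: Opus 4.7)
By Theorem~\ref{theo:aqui} combined with Theorem~\ref{theo:messages}, we have $\FL(v_0) = \epsilon_{\G(v_0)}(v_0)$, so the entire statement reduces to estimating the eccentricity of $v_0$ in the bipartite auxiliary graph $\G(v_0)$.

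For the upper bound $\FL(v_0) \le \epsilon_G(v_0) + Diam(G) + 1$, nodes $v \in V$ are trivial since $d_{\G(v_0)}(v_0,v) = d_G(v_0,v) \le \epsilon_G(v_0)$; the actual work is to bound $d_{\G(v_0)}(v_0,v')$ for $v \in V$. By Corollary~\ref{col:cross} a shortest such path uses exactly one jump edge, so it suffices to exhibit a single short candidate path. The key is to place the jump close to $v$, not close to $v_0$. Since $G$ is non-bipartite, there is at least one cross edge; let $u$ be a vertex of $G$ that is incident to some cross edge and has $d_G(u,v)$ as small as possible, and let $(u,u_1)$ be any cross edge at $u$. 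The minimality of $u$ forces every shortest $u$-to-$v$ path in $G$ to be cross-edge-free: an internal vertex of such a path lying on a cross edge would itself be cross-edge-incident and strictly closer to $v$ than $u$, contradicting the choice of $u$ (the endpoint case reduces to $u=v$, for which the claim is trivial). Hence $d_{\F(v_0)}(u,v) = d_G(u,v) \le Diam(G)$. Concatenating a shortest $v_0$-to-$u_1$ path in the $V$-copy of $\F(v_0)$ (length $d_G(v_0,u_1) = d_G(v_0,u) \le \epsilon_G(v_0)$), the jump edge $(u_1,u')$, and a shortest $u$-to-$v$ path in the $V'$-copy of $\F(v_0)$ produces a $v_0$-to-$v'$ path in $\G(v_0)$ of length at most $\epsilon_G(v_0) + 1 + Diam(G)$, as required.

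For the strict lower bound $\FL(v_0) > Rad(G)$, choose $v \in V$ with $d_G(v_0,v) = \epsilon_G(v_0)$. Every walk in $\G(v_0)$ projects edge-by-edge to a walk in $G$ of the same length (each jump maps to its underlying cross edge), so $d_{\G(v_0)}(v_0,v') \ge d_G(v_0,v) = \epsilon_G(v_0)$. Since $\G(v_0)$ is bipartite (Lemma~\ref{lem:bipartite}) and the reversal of colours on $V'$ places $v$ and $v'$ in opposite parts, the distances $d_{\G(v_0)}(v_0,v)$ and $d_{\G(v_0)}(v_0,v')$ have opposite parities; as the former equals $\epsilon_G(v_0)$ exactly, we obtain $d_{\G(v_0)}(v_0,v') \ge \epsilon_G(v_0) + 1 \ge Rad(G) + 1$. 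Applying the upper bound at a central node and this lower bound at every starting node yields the corresponding statements for $\FL_1(G)$; the characterization $\FL_1(G) = Rad(G)$ iff $G$ is bipartite is then the contrapositive of the lower bound together with Corollary~\ref{cor:bipartite}.

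The main obstacle is the upper bound. Naive choices of jump — routing the $V'$-leg through $v_0'$, or using the cross edge nearest $v_0$ — produce bounds only of order $2\epsilon_G(v_0) + Diam(G)$ or $3\epsilon_G(v_0) + 1$, which can easily exceed the target $\epsilon_G(v_0) + Diam(G) + 1$. The minimality argument, which identifies $d_{\F(v_0)}(u,v)$ with $d_G(u,v)$ by localizing the jump next to $v$, is what brings the $V'$-side cost down to $Diam(G)$ and makes the proof go through.
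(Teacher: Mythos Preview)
Your proof is correct and follows essentially the same route as the paper's: both reduce to estimating $\epsilon_{\G(v_0)}(v_0)$, both handle the upper bound for $v'$ by choosing a cross-edge endpoint closest to the target so that the $V'$-leg lies entirely in $\F(v_0)$, and both derive the strict lower bound from $d_{\G(v_0)}(v_0,v')>d_G(v_0,v)$ for a suitably far $v$. Your write-up is in fact more explicit than the paper's in two places --- you spell out the minimality argument showing the shortest $u$--$v$ path avoids cross edges, and you justify the ``$+1$'' in the lower bound via the projection/parity observation, whereas the paper simply asserts $d_{\G(v_0)}(v_0,v')\ge Rad(G)+1$ without comment.
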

\begin{proof}
  Let $u\in V$. Then $d_{\G(v_0)}(v_0,u) \le Diam(G)$. Thus, it
  suffices to give a bound for $d_{\G(v_0)}(v_0,u')$. Since $G$ is
  non-bipartite there exist cross edges with respect to $v_0$. Among
  all cross edges of $G$ choose $(v,w)$ such that
  $\min \{d_G(u,v),d_G(u,w)\}$ is minimal. WLOG assume that
  $d_G(u,v)\le d_G(u,w)$. Then the shortest path from $v$ to $u$ does
  not contain a cross edge (by choice of $(v,w)$. Thus, the distance
  from $v'$ to $u'$ in $\G(v_0)$ is at most $Diam(G)$.
  Hence,
  \[d_{\G(v_0)}(v_0,u') \le d_G(v_0,w) + 1 + Diam(G) \le
    \epsilon_{G}(v_0) + Diam(G) + 1.\] Hence,
  Theorem~\ref{theo:messages} implies the upper bound. Let $v$ be a
  node with $dist_G(v_0,v)\ge Rad(G)$. Then
  $d_{\G(v_0)}(v_0,v') \ge Rad(G)+1$. This yields the lower bound.
  Since this is true for all $v_0\in V$ the second statement also
  holds. Now the last statement follows from
  Corollary~\ref{cor:bipartite}.
\end{proof}

The above upper bound is sharp as can be seen for $G=C_n$ with
$n\equiv 1 (2)$. In this case $Rad(C_n)=Diam(C_n)=(n-1)/2$ and
$\FL_1(C_n)=n$. Fig.~\ref{fig:bipartite} shows on the left a
non-bipartite graph with $Rad(G)+1=\FL_1(G)$.

\section[The Case |S|>1]{The Case $|S|>1$}
\label{sec:several-sources}
The case $|S|>1$ requires a slightly different definition of the
auxiliary graph $\G$. First, a virtual source $v^\ast$ connected by
edges to all source nodes in $S$ is introduced. Call this graph
$G^\ast(S)$. Note that even in case $G$ is bipartite $G^\ast(S)$ is
not necessarily bipartite. Fig.~\ref{fig:aux} shows an example for
$G^\ast(S)$. When $v^\ast$ sends a message to all neighbors, then in
the next round all nodes of $S$ send a message to all their neighbors
except $v^\ast$. Thus, the initial behavior of \SynFl is the same for
$G$ with a given set $S$ and for $G^\ast(S)$ when started by node
$v^\ast$ only. Later the behavior may deviate because nodes of $S$ may
send a message to $v^\ast$. To eliminate this effect we change the
definition of the graph $\G(S)$ as follows. The auxiliary graph
$\G(S)$ is the graph $\G(v^\ast)$ constructed from $G^\ast(S)$ as in
section~\ref{sec:subst-graph} with the only difference that the copy
of $v^{\ast}$ in the second copy of $\F(S)$ is removed. More formally.

\begin{figure}%
  \hfill
    \includegraphics[valign=t]{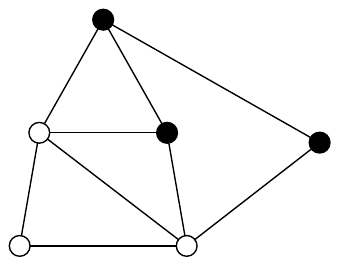}
  \hfill
    \includegraphics[valign=t]{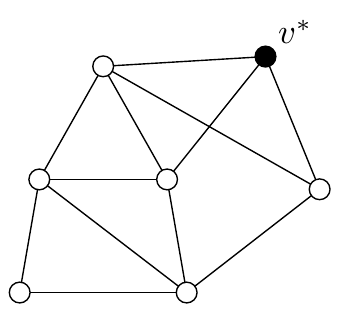}
  \hfill\null
  \caption{On the left a graph with $|S|=3$; on the right the auxiliary graph
    $G^\ast(S)$. \label{fig:aux}}
\end{figure}

\begin{definition}
  Let $S\subseteq V$. Denote by $\F_S(v^\ast)$ the subgraph of
  $G^\ast(S)$ with node set $V\cup \{v^\ast\}$ and all edges of
  $G^\ast(S)$ that are not cross edges with respect to $v^\ast$. Let
  $\F_S'(v^\ast)$ be a copy of $\F_S(v^\ast)$ without the node
  $v^\ast$ and the incident edges. Denote by $V'$ the node set of
  $\F_S'(v^\ast)$. Let $\G(S)$ the graph with node set
  $(V\cup \{v^\ast\})\cup V'$ that consists of $\F_S(v^\ast)$ and
  $\F_S'(v^\ast)$ and additionally for any cross edge $(u,w)$ of $G$ the
  edges $(u,w')$ and $(w,u')$.
\end{definition}

Fig.~\ref{fig:aux_graph} shows $\G(S)$ where graph $G$ and set $S$ are
taken from Fig.~\ref{fig:aux}. $\G(S)$ has $2\abs{E} + \abs{S}$ edges
and $2\abs{V}+1$ nodes.


\begin{figure}[h]%
  \hfill
    \includegraphics[valign=t]{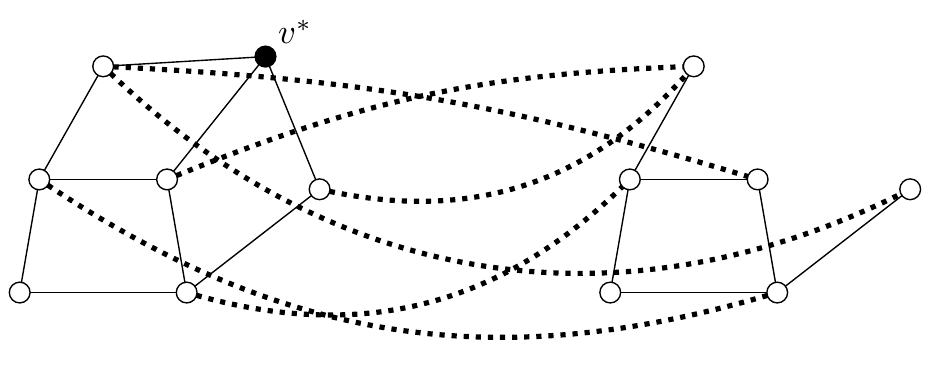}
  \hfill\null
  \caption{The graph $\G(S)$ for the graph $G^\ast(S)$ depicted in
    Fig.~\ref{fig:aux}. Edges connecting $\F_S(v^\ast)$ and
    $\F'_S(v^\ast)$ are displayed as dotted lines.\label{fig:aux_graph}}
\end{figure}


\subsection{Proof of Theorem~\ref{theo:fundametal} and
  Theorem~\ref{theo:mainRes}}
\label{sec:proof1_and_2}

\begin{proof}(Theorem~\ref{theo:fundametal}) Clearly $\G(S)$ is
  bipartite. We prove by induction on $i$ that the behavior of \SynFl
  in round $i$ on graph $\G(S)$ when started by the node $v^\ast$ is
  the same as that of \SynFl in round $i+1$ on $G$ when started by the
  nodes in $S$. The statement is obviously true for $i=1$. Let $i>1$.
  Note that $\FL_G(S)$ is not in all cases equal to
  $\FL_{G^\ast(S)}(v^\ast)-1$. In $G^\ast(S)$ a node in $S$ may send
  in round $i>1$ a message to $v^\ast$ which will cause further
  messages between nodes in $G^\ast(S)$ which have no counterpart in
  $G$. This can only happen if a message has passed a cross edge of
  $G^\ast(S)$; consider a graph with a single edge where both nodes
  belong to $S$. In $\G(S)$ the situation is different. Messages sent
  via cross edges in $G$ are in $\G(S)$ sent into $\F_S'(v^\ast)$ and
  by Lemma~\ref{lem:simple} they will never leave that component.
  Since $\F_S'(v^\ast)$ doesn't contain a copy of $v^\ast$ the message
  flow will terminate as in $G$. Hence, we can repeat the proof of
  Lemma~\ref{lem:equiv} to show $\FL_G(S) = \FL_{\G(S)}(v^\ast)-1$.
  Since $\G(S)$ is bipartite Corollary~\ref{cor:bipartite} implies
  $\FL_{\G(S)}(v^\ast) = \epsilon_{\G(S)}(v^\ast)$. Now the
  construction of $\G(S)$ implies
  $\FL_{\G(S)}(v^\ast) \le 1 +d_G(S,V) +1 + Diam(G)$. Hence,
  $\FL_G(S) \le d_G(S,V) +1 + Diam(G)$. The statement about the number
  of messages follows immediately from the structure of $\G(S)$: One
  message is sent over each edge of $\F_S(v^\ast)$ and $\F_S'(v^\ast)$
  and two messages are sent via each cross edge of $G$.
\end{proof}

\begin{lemma}\label{lem:tree_gamma}
Let  $T$ be a tree and $k\ge n/2$. Then $r^{ni}_k(T)\le 2$.
\end{lemma}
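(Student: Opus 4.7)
The plan is to build a set $U \subseteq V$ of size exactly $k$ satisfying the two requirements (no isolated vertex in $T[U]$ and $d_T(U) \le 2$) in two stages. Stage one produces a base set $U_0$ with $|U_0| \le \lceil n/2 \rceil$ already enjoying both properties; stage two extends $U_0$ up to size exactly $k$ without violating either property.

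For Stage one I would argue by induction on $n$, the cases $n \le 4$ being verified by inspection. For the induction step, fix a longest path $v_0 v_1 \cdots v_d$ in $T$. Standard longest-path arguments show that every non-spine neighbour of $v_1$ is a leaf, and every non-spine neighbour of $v_2$ has only leaves as its further descendants. I then distinguish cases based on the structure near $v_0$: if $v_1$ has a leaf neighbour beyond $v_0$, I prune one extra leaf and apply the inductive hypothesis to the smaller tree; if $v_2$ carries a side branch, I prune that branch; otherwise the tail of $T$ is a clean path $v_0 v_1 v_2 v_3$, which I prune and apply the inductive hypothesis to the rest before re-introducing $v_2$ and $v_3$ to the recovered $U_0'$. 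In each case, the constraint that $T'[U_0']$ has no isolated vertex forces $U_0'$ to contain a vertex near the remaining end of the spine, and that is what guarantees that every pruned vertex ends up within distance $2$ of the final $U_0$ while $|U_0| \le \lceil n/2 \rceil$.

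For Stage two, since $k \ge n/2 \ge |U_0|$, I extend $U_0$ greedily: while $|U| < k$, pick any $v \in V \setminus U$ adjacent to the current $U$ (such a $v$ exists by connectivity of $T$), and add it. This addition preserves both required properties, since the new $v$ has a neighbour in $U$ and is therefore not isolated in $T[U \cup \{v\}]$, no existing $U$-vertex loses a neighbour, and $d_T(U)$ only shrinks. After $k - |U_0|$ such steps, $|U| = k$ and both conditions hold, which proves $r^{ni}_k(T) \le d_T(U) \le 2$.

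The hard part will be the case analysis of Stage one, specifically verifying that the bound $|U_0| \le \lceil n/2 \rceil$ is preserved in every recursive step. The tightest case is the pure-tail case, in which four vertices are removed but two have to be added back; the accounting must be handled carefully there, and one also has to deal with the subtlety introduced by any side branches hanging off $v_3$, which can force the remainder after pruning to be a forest rather than a tree.
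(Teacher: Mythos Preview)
Your two-stage plan is sound and Stage two is entirely correct; the paper uses essentially the same greedy extension (implicitly) once a small base set is in hand. The difference lies in Stage one. Rather than a longest-path induction, the paper strips all leaves from $T$ to obtain the subtree $T'$ on $n-s$ vertices (where $s$ is the number of leaves of $T$) and then invokes a bound of Chellali and Haynes, $\gamma_t(T')\le (|V(T')|+s')/2$ with $s'$ the number of support vertices of $T'$; since $s'\le s$, this gives $\gamma_t(T')\le n/2$. A minimum total dominating set $D$ of $T'$ automatically has no isolated vertex in $T[D]$ and $2$-dominates $T$ (every leaf of $T$ is adjacent to some vertex of $T'$), so $D$ serves directly as your $U_0$ with $|U_0|\le n/2$. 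This replaces the entire case analysis by a one-line citation, whereas your route is self-contained but considerably longer.

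Your Stage-one sketch also has a real gap beyond the ones you already flag. In case~(b), when the side branch at $v_2$ consists of $w$ together with at least one leaf child $\ell$, simply pruning the branch and setting $U_0=U_0'$ can fail: the no-isolated-vertex property together with $d_{T'}(v_0,U_0')\le 2$ only forces $\{v_1,v_2\}\cap U_0'\ne\emptyset$, not $v_2\in U_0'$, and if $U_0'$ happens to contain $\{v_0,v_1\}$ but not $v_2$ then $d_T(\ell,U_0')=3$. This is repairable (for instance by swapping $v_0$ for $v_2$ in $U_0'$, which preserves both invariants), but your justification that ``no isolated vertex forces $U_0'$ to contain a vertex near the remaining end of the spine'' does not by itself deliver $v_2$. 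Combined with the forest issue you already acknowledge in case~(c), the induction will need noticeably more sub-cases than the outline suggests; the leaf-stripping argument sidesteps all of this.
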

\begin{proof}
  The result is clear for $n <5$. Let $s$ be the number of leaves of
  $T$. Let $T'$ be the tree obtained from $T$ by removing all leaves.
  Denote by $s'$ the nodes of $T'$ that have a neighbor of degree $1$.
  By Theorem 7(a) of \cite{Chellali:2004} we $\gamma_t(T')\le (n-s+s')/2$.
  Since the number of leaves of $T'$ is less then the number of leaves
  of $T$ we have $s'\le s$. This yields $\gamma_t(T')\le n/2$. Hence,
  $r^{ni}_k(T')=1$ for $k\ge n/2$ and therefore $r^{ni}_k(T)\le 2$. 
\end{proof}

This bound is sharp. Fig.~\ref{fig:kamm} shows a tree $T$ with
$r^{ni}_4(T)= 2$, the black nodes form a non-isolated $4$-center of $T$
of minimal radius. Also, $r^{ni}_7(T)= 2$.

\begin{figure}[h]%
  \hfill
  \includegraphics{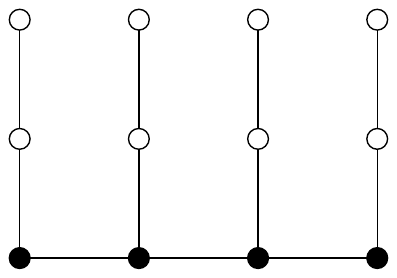}
  \hfill\null
  \caption{A bipartite graph $G$ with $\FL_7(G) -r_7(G)=2$.
    Furthermore, $r_4(G)+1 = \FL_4(G)=2$,
    $\FL_7(G)= r^{ni}_7(G)+1= 3$, $\FL_8(G)= r_4(G)+1= 2$, and
    $\FL_8(G)= Rad(G)+1= 2$.\label{fig:kamm}}
\end{figure}

\begin{lemma}\label{lem:bipar_gamma}
  Let $G$ be a connected graph and $k\ge n/2$. Then
  $r^{ni}_k(G)\le 2$.
\end{lemma}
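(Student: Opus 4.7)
The plan is to reduce the statement for general connected graphs to the tree case already handled in Lemma~\ref{lem:tree_gamma} via a spanning tree argument. The core observation is that adding edges to a graph can only shrink distances and can only add edges inside an induced subgraph, so a non-isolated $k$-center of a spanning subgraph automatically serves as a non-isolated $k$-center of the ambient graph, with no larger radius.

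Concretely, I would proceed as follows. Pick any spanning tree $T$ of $G$; then $T$ has $n$ vertices and $k \ge n/2$, so Lemma~\ref{lem:tree_gamma} yields a subset $U \subseteq V$ with $|U|=k$ such that $T[U]$ has no isolated node and $d_T(v,U)\le 2$ for every $v\in V$. I would then argue that this same $U$ witnesses $r^{ni}_k(G)\le 2$: since $E(T)\subseteq E(G)$, every edge of $T[U]$ is an edge of $G[U]$, so $G[U]$ has no isolated node either; and because shortest paths in $T$ are paths in $G$, we have $d_G(v,U)\le d_T(v,U)\le 2$ for all $v\in V$, whence $d_G(U)\le 2$.

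There is essentially no obstacle here once Lemma~\ref{lem:tree_gamma} is in hand; the only thing to verify carefully is the two monotonicity claims (non-isolation is preserved when adding edges to the induced subgraph, and distances only decrease when passing from $T$ to $G$), both of which are immediate from $T\subseteq G$. Hence the conclusion $r^{ni}_k(G)\le 2$ follows for every connected graph $G$ with $k\ge n/2$.
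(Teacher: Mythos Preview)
Your proof is correct and is essentially the same as the paper's: both reduce to Lemma~\ref{lem:tree_gamma} via the observation that adding edges cannot increase $r^{ni}_k$, the only difference being that the paper packages this as an induction on the number of edges (removing one non-bridge edge at a time down to a spanning tree) while you jump directly to a spanning tree in one step.
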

\begin{proof}
  The proof is by induction on $m$, the number of edges of $G$. If
  $m=n-1$ then $G$ is a tree and Lemma~\ref{lem:tree_gamma} yields
  the result. Let $m> n-1$. Consider an edge $e$ such that
  $G \setminus e$ is connected. By induction
  $r^{ni}_k(G\setminus e)\le 2$. Since adding an edge does not
  increase the non-isolated k-radius, the proof is complete.
\end{proof}

\begin{proof}(Theorem~\ref{theo:mainRes})
  (1) Obviously, $r_k(G)\le \FL_k(G)$. Let $S$ be a non-isolated
  $k$-center of $G$ and $v\in V$. Then there exists $u\in S$ such that
  $d_G(u,v)\le r^{ni}_k(G)$ and the path from $u$ to $v$ in
  $G^\ast(S)$ does not use a cross edge with respect to $v^\ast$.
  Since $S$ contains no isolated node, after the first round each node
  of $V'$ that is a copy of a node in $S$ receives a message. Thus,
  there exists $w\in N(u) \cap S$. Hence, the path $v^\ast, w,u'$
  exists in $\G(S)$. Therefore, the distance from $v^\ast$ to $v'$ in
  $\G(S)$ is at most $2 + dist_G(u,v)$, i.e.,
  $\epsilon_{\G(S)}(v^\ast)\le 2 + r^{ni}_k(G)$. Hence by
  Lemma~\ref{lem:bipartite} and Corollary~\ref{cor:bipartite} we have
  $\FL_{\G(S)}(v^\ast)\le 2 + r^{ni}_k(G)$. The first bound follows
  from Theorem~\ref{theo:fundametal}.

  Let $S\subset V$ with $|S|=\lfloor k/2\rfloor$ and
    $d_G(S,V) = r_{\lfloor k/2\rfloor}(G)$. Obviously, there exists a
    subset $S'$ of $V$ such that $|\hat{S} \cup S'|=k$ and
    $G[\hat{S} \cup S']$ has no isolated node. Let
    $\hat{S}=S \cup S'$. Then,
    $r^{ni}_k(G) \le dist(\hat{S},V)\le r_{\lfloor k/2\rfloor}(G)$.

  (2) Let $\FL_k(G)=1$. Then each node $v\in V$ must have all its
  neighbors in $S$ or none. Let $S_1$ be the set of nodes that have
  all their neighbors in $S$. Let $v\in V\setminus S_1$. Then
  $N(v)\cap S = \emptyset$. Hence, $v$ does not receive a message
  in the first round. Since $\FL_k(G)=1$ and $G$ is connected, we have
  $v\in S$. Thus $V\setminus S_1 \subseteq S$.

  Assume there exists $v\in S_1$ with $N(v)\subseteq S_1$. Then
  $v\in S$. This yields $N(N(v))\subseteq S_1$ and consequently
  $V=S_1$ and thus $V=S$ since $G$ is connected, i.e., $n=k$. Next
  assume $N(v)\not \subseteq S_1$ for all $v\in S_1$. Thus, for
  $v_1\in S_1$ there exits a neighbor $v_2$ that is not in $S_1$,
  i.e., $N(v_2)\cap S = \emptyset$. Then $v_2\in S$ because
  $N(v_1)\subseteq S$. If $v_1$ would be in $S$, then all neighbors of
  $v_2$ would be in $S$ and thus, $v_2\in S_1$. Thus, $v_1\notin S$.
  Hence, $S_1\cap S=\emptyset$. Therefore, $S$ and $S_1$ are
  independent sets. Also $S_1 \cup S=V$. Thus, $G$ is bipartite. Since
  the opposite direction is trivially true, the proof is complete.

  (3) Let $S$ be a minimal total dominating set of $G$. Then $G[S]$ has
  no isolated node and $d(S,V)=1$. Hence, for $k\ge \gamma_t(G)$ we
  have $r^{ni}_k(G)=1$. For $k > \gamma_t(G)$ we add $k-\gamma_t(G)$
  nodes to $S$ such that the induced subgraph of the resulting set $T$
  has no isolated nodes. Obviously, we still have $d(T,V)=1$. Thus,
  $\FL_{\G(S)}(v^\ast)\le 3$. Now, Theorem~\ref{theo:fundametal} yields
  $\FL_{G}(T)\le2$.

  There are several upper bounds known for
  $\gamma_t(G)$. Cockayne et al.\ proved that $\gamma_t(G)\le
  2n/3$ for every connected graph $G$ with $n\ge
  3$ \cite{Cockayne:1980}. Archdeacon et al.\ improved this result
  such that if $\delta(G)\ge 3$ then $\gamma_t(G)\le
  n/2$ \cite{Archdeacon:2004}. These results complete the proof.

  (4) If $k\ge n/2$ then $r^{ni}_k(G)\le 2$ by
  Lemma~\ref{lem:bipar_gamma}. Thus,
  Theorem~\ref{theo:mainRes} (\ref{theo:mainRes1}) implies the
  bound.
  \end{proof}


  The first bound of Theorem~\ref{theo:mainRes} (\ref{theo:mainRes1})
  is sharp. For $C_n$ with $n\equiv 3 (4)$ we have
  $(n-3)/4 +1 =\FL_4(C_n)=r^{ni}_4(C_n)+1$. On the other hand
  $r^{ni}_3(C_n)-\FL_3(C_n)$ and $\FL_3(C_n) -r_3(C_n)$ with
  $n\equiv 1 (2)$ are unbounded for growing $n$. The example in
  Fig.~\ref{fig:kamm} shows that all bounds of
  Theorem~\ref{theo:mainRes} (\ref{theo:mainRes1}) are sharp.

Note that $r_k(G)=1$ for $k\ge n/2$ \cite{Meir:1975}. Hence,
Theorem~\ref{theo:mainRes} (\ref{theo:mainRes4}) yields
$\FL_k(G)-r_k(G) \le 2$ if $k\ge n/2$. Dankelmann et al.\ provide
several upper bounds for $r_k(G)$ in terms of $n$ and $\delta$
\cite{Dankelmann:2019}. These can be used to state bounds for
$\FL_k(G)$ in terms of $n$ and $\delta$. The following result is a
consequence of Theorem 14 of \cite{Dankelmann:2019} and
Theorem~\ref{theo:mainRes} (\ref{theo:mainRes1}).
\begin{corollary}
  Let $G$ be a connected, triangle-free non-bipartite graph and $1<k<n$. Then
  \[\FL_k(G)\le \frac{2(n-1)}{\delta(\lfloor
    k/2\rfloor +1)}+5.\]
\end{corollary}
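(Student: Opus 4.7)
The plan is to chain two inequalities: the internal bound from Theorem~\ref{theo:mainRes}(\ref{theo:mainRes1}) and the external bound on the $k$-radius of triangle-free graphs from Theorem~14 of \cite{Dankelmann:2019}.

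First, I would invoke Theorem~\ref{theo:mainRes}(\ref{theo:mainRes1}), which (for $k>1$) gives
\[
\FL_k(G) \;\le\; r^{ni}_k(G) + 1 \;\le\; r_{\lfloor k/2\rfloor}(G) + 1.
\]
This is the only place where the flooding framework enters; from here on the problem reduces to purely extremal graph theory, namely estimating $r_{\lfloor k/2\rfloor}(G)$ in terms of $n$ and $\delta$.

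Next, I would apply Theorem~14 of \cite{Dankelmann:2019}, which for a connected triangle-free graph bounds the $j$-radius by $\tfrac{2(n-1)}{\delta(j+1)} + 4$. Substituting $j=\lfloor k/2\rfloor$ yields
\[
r_{\lfloor k/2\rfloor}(G) \;\le\; \frac{2(n-1)}{\delta(\lfloor k/2\rfloor+1)} + 4,
\]
and combining with the previous inequality gives the claimed bound with the additive constant $5$.

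The only subtlety worth flagging is that Theorem~\ref{theo:mainRes}(\ref{theo:mainRes1}) requires $k>1$ (consistent with the hypothesis $1<k<n$), and that the triangle-free hypothesis is needed solely to apply the Dankelmann et al.\ bound; the non-bipartiteness assumption plays no role in the argument itself, but is presumably included so that the bound is non-trivial (for bipartite graphs the sharper Theorem~\ref{theo:special} already applies). There is no real obstacle here, since both ingredients have already been established; the corollary is essentially just the composition of the two estimates.
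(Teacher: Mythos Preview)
Your proposal is correct and matches the paper's own argument exactly: the corollary is stated in the paper as an immediate consequence of Theorem~\ref{theo:mainRes}(\ref{theo:mainRes1}) together with Theorem~14 of \cite{Dankelmann:2019}, which is precisely the two-step chaining you describe. Your side remark that the non-bipartiteness hypothesis is not actually used in the derivation is also accurate.
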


\subsection{Proof of Theorem~\ref{theo:nonbip} }
\label{sec:proof3}
If $G$ is non-bipartite, then $G^\ast(S)$ is also non-bipartite. Then
$\FL_k(G)\ge r_k(G)+1$ by Theorem~\ref{theo:fundametal}. Hence,
Theorem~\ref{theo:nonbip} (\ref{theo:nonbip1}) holds. In this section
we prove a lower bound for $\FL_k(G)$ that depends only on $Rad(G)$
and $k$.

\begin{lemma}\label{lem:treeradius1}
  For $k>0$ and all trees $T$ we have  $kr_k(T)\ge Rad(T)-k/2$.
\end{lemma}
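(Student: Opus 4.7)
The plan is to bound $\mathrm{Diam}(T)$ in terms of $k$ and $r_k(T)$, then convert to a bound on $\mathrm{Rad}(T)$ via the classical fact that $\mathrm{Rad}(T) = \lceil \mathrm{Diam}(T)/2\rceil$ for trees (Jordan's theorem). The diameter is bounded by counting how many nodes on a longest path any single center can cover.

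\smallskip

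\noindent\textbf{Setup.} Let $r = r_k(T)$ and let $U = \{u_1,\dots,u_k\}$ realize $r_k$, so every node of $T$ lies within distance $r$ of some $u_p$. Pick a diameter path $P = v_0,v_1,\dots,v_D$ of $T$, where $D = \mathrm{Diam}(T)$, so $P$ has $D+1$ nodes.

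\smallskip

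\noindent\textbf{Counting step.} The key geometric fact is that in a tree, each center $u_p$ has a unique closest node $x_p$ on $P$: removing the edges of $P$ splits $T$ into subtrees each attached to $P$ at a single node, so the path from $u_p$ to any $v_t$ passes through $x_p$. Writing $c_p = d(u_p, x_p)$ and letting $t_p$ be the index of $x_p$, we get $d(v_t, u_p) = |t - t_p| + c_p$. Hence the set of nodes of $P$ within distance $r$ of $u_p$ is an interval of size at most $\max\{0, 2(r-c_p)+1\} \le 2r+1$. Since every $v_t$ must be covered by some center, we obtain $D+1 \le k(2r+1)$, i.e.\ $D \le 2kr + k - 1$.

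\smallskip

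\noindent\textbf{From diameter to radius.} By Jordan's theorem for trees, $\mathrm{Rad}(T) = \lceil D/2 \rceil \le (D+1)/2$. Substituting the bound on $D$ yields
\[
\mathrm{Rad}(T) \;\le\; \frac{2kr + k}{2} \;=\; kr + \frac{k}{2},
\]
which rearranges to $k\, r_k(T) \ge \mathrm{Rad}(T) - k/2$, as claimed.

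\smallskip

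There is no serious obstacle here; the only delicate point is verifying that each center covers a contiguous interval on $P$, which relies on the tree structure (so that projections onto $P$ are well-defined and distances factor through $x_p$). Everything else is bookkeeping.
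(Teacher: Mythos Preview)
Your proof is correct and follows essentially the same approach as the paper: both cover the nodes of a diameter path by intervals of length at most $2r_k(T)+1$ around each center to obtain $\mathrm{Diam}(T)\le 2kr_k(T)+k-1$, and then convert to radius via $\mathrm{Diam}(T)\ge 2\,\mathrm{Rad}(T)-1$. Your version is more explicit about why each center covers a contiguous interval on the path (the projection argument), whereas the paper states this step tersely.
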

\begin{proof}
  Let $v,w\in V$ such that $d_G(v,w)=Diam(T)$. In the best case two
  consecutive nodes on the path from $v$ to $w$ that belong to a
  $k$-center have distance $2r_k(T)+1$. Thus,
  $2kr_k + k -1 \ge Diam(T)$. This yields $2kr_k + k -1 \ge 2Rad(T)-1$
  which proves the result.
\end{proof}

Note that the bound of this lemma is sharp. Consider a path $P$ with
$k(2c+1)$ nodes for $c>0$. Then $r_k(P)=c$ and $Rad(P)=k(2c+1)/2$.


\begin{lemma}\label{lem:span_tree}
  Let $G$ be a connected graph. Then there exists a spanning tree $T$
  of $G$ such that $r_k(G)=r_k(T)$ for all $k\ge 1$.\label{lem:tree_graph}
\end{lemma}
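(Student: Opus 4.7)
The plan is to reduce the claim, for each fixed $k\ge 1$, to the classical fact that for any vertex subset $U\subseteq V$ of a connected graph $G$ there is a spanning tree of $G$ that preserves all distances from $U$ to the rest of the vertices. Concretely, I would fix $k\ge 1$, take an optimal $k$-center $U\subseteq V$ with $|U|=k$ and $d_G(U)=r_k(G)$, and grow a spanning tree $T$ of $G$ by a multi-source BFS: all vertices of $U$ sit at level $0$, and for every $v\notin U$ I would pick the parent of $v$ in $T$ to be some $G$-neighbor $u$ satisfying $d_G(u,U)=d_G(v,U)-1$. The result is a spanning forest with $|U|$ components, which I would stitch into a single spanning tree of $G$ by appending $|U|-1$ further $G$-edges joining distinct components.

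By construction each $v$ reaches its BFS root in $T$ along a shortest $G$-path, so $d_T(v,U)=d_G(v,U)$ in the forest, and the inter-component edges can only decrease $d_T(v,U)$ while the inequality $d_T\ge d_G$ keeps it from dropping below $d_G(v,U)$. Therefore $d_T(U)=\max_v d_T(v,U)=r_k(G)$, which gives $r_k(T)\le d_T(U)=r_k(G)$. Combining this with the trivial inequality $r_k(T)\ge r_k(G)$, which holds for every spanning tree of $G$ because $d_T(x,y)\ge d_G(x,y)$ for all $x,y$, yields $r_k(T)=r_k(G)$ for the fixed $k$.

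The main obstacle is the universal quantifier ``for all $k\ge 1$'': the multi-source BFS tree depends on the chosen $k$-center, and a priori distinct values of $k$ may impose incompatible structural demands on $T$. My plan for obtaining a single spanning tree that works simultaneously for every $k$ is induction on $|E(G)|$. The base case is $G$ a tree, where $T=G$ suffices. For the inductive step, assuming $G$ has a cycle, I would choose an edge $e$ on some shortest cycle $C$ of $G$ and argue that $r_k(G\setminus e)=r_k(G)$ for every $k$; the inductive hypothesis applied to $G\setminus e$ then produces a spanning tree of $G$ with the desired property. The delicate point is showing that a shortest cycle always contains an edge that is simultaneously redundant for every $k$: intuitively the detour of length $|C|-1$ around $C$ leaves enough slack to reroute any covering shortest path from any $k$-center, but excluding a pathological scenario in which different $k$-centers collectively pin down every edge of $C$ is where I expect the argument to have to work hardest, and it is likely to need an extremal or exchange argument on the choice of $T$.
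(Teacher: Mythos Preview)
Your multi-source BFS construction for a fixed $k$ is essentially the paper's argument: the paper partitions $V$ into Voronoi cells $V_1,\dots,V_k$ around an optimal $k$-center $U=\{u_1,\dots,u_k\}$, takes a BFS tree $T_i$ of $G[V_i]$ rooted at $u_i$, and stitches the $T_i$ together with $k-1$ extra edges. The key step---showing that shortest $G$-paths from $u_i$ to vertices of $V_i$ stay inside $V_i$, so that $d_{T_i}(v,u_i)=d_G(v,u_i)$---is exactly your observation that each $v$ reaches its BFS root along a shortest $G$-path. Your first two paragraphs therefore reproduce the paper's proof with only cosmetic differences.

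Where you diverge is in reading the quantifiers. The paper's proof fixes $k$ at the outset (``Let $U=\{u_1,\dots,u_k\}$ be a $k$-center of $G$'') and builds a tree that depends on this $k$; it does not produce a single spanning tree that works simultaneously for every $k$, and the only place the lemma is invoked (the proof of Theorem~\ref{theo:nonbip}(\ref{theo:part3})) needs it only for the one $k$ under consideration. So the statement, despite its literal wording, is intended and proved with the universal quantifier on the outside: for each $k\ge 1$ there exists a spanning tree $T$ (depending on $k$) with $r_k(T)=r_k(G)$. Your additional induction on $|E(G)|$ is therefore aimed at a strictly stronger claim than the paper asserts or needs---and, as you yourself flag, the step ``some edge of a shortest cycle is simultaneously redundant for every $k$'' is where that stronger argument has a real gap that you have not closed.
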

\begin{proof}
  Let $U=\{u_1,\ldots,u_k\}$ be a $k$-center of $G$. Let
  $V_1=\{v\in V\suchthat d_G(v,u_1)\le d_G(v,u_j) \text{ for }
  j=2,\ldots,k \}$. For $i=2, \ldots, k$ let
  $V_i = \{v\in V\setminus (V_1\cup\ldots \cup V_{i-1})\suchthat
  d_G(v,u_i)\le d_G(v,u_j) \text{ for } j=i+1,\ldots,k \}$. Note that
  $u_i\in V_i$. Clearly, the $V_i$ form a partitioning of $V$. Let
  $v \in V_i$ and $u$ be a node on the shortest path from $u_i$ to $v$.
  Thus, $d_G(u_i,v)\le d_G(u_j,v)$. Assume that $u\in V_j$. Then
  $d_G(u_j,u)\le d_G(u_i,u)$. This implies that $d_G(u_j,u)=d_G(u_i,u)$
  and $d_G(u_j,v)=d_G(u_i,v)$. Since $v\in V_i$ we have that $i\le j$ and
  hence $u \in V_i$. Thus, $u\in V_i$ and hence $G_i=G[V_i]$ is a
  connected graph. Let $T_i$ be a breadth-first-search tree of $G_i$ rooted
  in $u_i$. By adding $k-1$ edges the $T_i$'s can be combined into a
  spanning tree $T$ of $G$. Let $w_i$ be a central node of $T_i$ and
  $U_T=\{w_1,\ldots,w_k\}$. For each $j =1,\ldots, k$ we have
  \[Rad(T_j) = \max_{v\in V_j} d_{T_j}(v,w_j) \le \max_{v\in V_j}
    d_{T_j}(v,u_j)= \max_{v\in V_j} d_{G}(v,u_j)\le r_k(G).\] This
  yields
  \[r_k(T)\le d_T(U_T,V) \le \max_{j=1,\ldots,k} Rad(T_j) \le r_k(G)
    \le r_k(T).\]This completes the proof.
\end{proof}

The graph shown in Fig.~\ref{fig:kamm} demonstrates that $\FL_k(G)$ is
not always decreasing for increasing $k$:
$\FL_5(G)=1,\FL_6(G)=3,\FL_7(G)=2,\FL_8(G)=2$. Next we show that for
non-bipartite graphs these values monotonically decrease.

\begin{lemma}\label{lem:monoton}
  Let $G$ be a connected non-bipartite graph,
  $\emptyset \not=S\subset V$, and $v\in V\setminus S$. Then
  $\FL_G(S\cup \{v\}) \le \FL_G(S)$.
  \end{lemma}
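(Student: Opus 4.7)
By Theorem~\ref{theo:fundametal}, applied together with Corollary~\ref{cor:bipartite} to the bipartite graph $\G(S)$, we have $\FL_G(S)=\epsilon_{\G(S)}(v^\ast)-1$, and likewise $\FL_G(S')=\epsilon_{\G(S')}(v^\ast)-1$ with $S':=S\cup\{v\}$. My plan is therefore to establish the eccentricity inequality $\epsilon_{\G(S')}(v^\ast)\le\epsilon_{\G(S)}(v^\ast)$. For a vertex $u\in V$ this is immediate: passing from $G^\ast(S)$ to $G^\ast(S')$ only adds the edge $\{v^\ast,v\}$, so BFS distances from $v^\ast$ can only shrink, giving $d_{\G(S')}(v^\ast,u)=d_{G^\ast(S')}(v^\ast,u)\le d_{G^\ast(S)}(v^\ast,u)=d_{\G(S)}(v^\ast,u)$. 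The real work is bounding $d_{\G(S')}(v^\ast,u')$ for $u'\in V'$.

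For this I would first derive a walk-parity characterization of $d_{\G(S)}(v^\ast,u')$. By the $v^\ast$-version of Corollary~\ref{col:cross}, any $v^\ast\to u'$ path in $\G(S)$ projects length-preservingly to a walk from $v^\ast$ to $u$ in $G^\ast(S)$ using an odd number of cross edges, and every such walk lifts to such a path. Counting levels in the BFS from $v^\ast$ (cross edges preserve the level, non-cross edges shift it by $\pm1$, and the net shift equals $d_G(S,u)+1$), one sees that such walks have length congruent to $d_G(S,u)\pmod 2$, and re-entering $v^\ast$ is never profitable. This yields
\[
d_{\G(S)}(v^\ast,u')=1+w_-(S,u),
\]
where $w_-(S,u)$ denotes the minimum length of a walk in $G$ from $S$ to $u$ whose parity is opposite to that of $d_G(S,u)$. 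Because $G$ is connected and non-bipartite, every shortest path can be extended around an odd cycle to flip parity, so $w_-(S,u)$ is finite; the same formula and finiteness hold with $S$ replaced by $S'$.

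Write $a=d_G(S,u)$ and $b=d_G(S',u)$, so $b\le a$, and finish by a short parity case analysis. If $a\equiv b\pmod 2$, then $w_-(S,u)$ and $w_-(S',u)$ refer to walks of the same parity; since every walk from $S$ is also a walk from $S'\supseteq S$, one has $w_-(S',u)\le w_-(S,u)$, so $d_{\G(S')}(v^\ast,u')\le d_{\G(S)}(v^\ast,u')\le\epsilon_{\G(S)}(v^\ast)$. If instead $a\not\equiv b\pmod 2$, then the parity required for $w_-(S',u)$ equals that of $a$, and a shortest $S$-to-$u$ path in $G$ (of length $a$) is itself an admissible $S'$-walk; therefore $w_-(S',u)\le a$ and $d_{\G(S')}(v^\ast,u')\le a+1=d_{\G(S)}(v^\ast,u)\le\epsilon_{\G(S)}(v^\ast)$. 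In either case $d_{\G(S')}(v^\ast,u')\le\epsilon_{\G(S)}(v^\ast)$, which combined with the $V$-case gives $\epsilon_{\G(S')}(v^\ast)\le\epsilon_{\G(S)}(v^\ast)$ and hence the lemma.

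The main obstacle I expect is the walk-parity characterization of $d_{\G(S)}(v^\ast,u')$; once it is in place, the lemma reduces to the brief parity case analysis above. The non-bipartiteness hypothesis is used precisely to guarantee the existence of walks of both parities between any two vertices of $G$, so that $w_-$ is defined; this is also exactly why the bipartite counterexample of Fig.~\ref{fig:kamm} is not covered by this argument.
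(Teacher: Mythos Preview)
Your argument is correct, and it takes a genuinely different route from the paper's own proof. Both proofs reduce to showing $d_{\G(\hat S)}(v^\ast,u')\le \epsilon_{\G(S)}(v^\ast)$ for every $u\in V$ (the case $u\in V$ being immediate), but they diverge from there. The paper works syntactically with paths: it invokes Corollary~\ref{col:cross} to take a shortest $v^\ast$--$w'$ path in $\G(S)$, views its projection as a $v^\ast$--$w$ path in $G^\ast(S)$ with exactly one $S$-cross edge, reinterprets that path inside $G^\ast(\hat S)$, and then iteratively shortens it whenever more than one $\hat S$-cross edge appears, until a single $\hat S$-cross edge remains. Your approach instead extracts the closed-form identity $d_{\G(S)}(v^\ast,u')=1+w_-(S,u)$ from the double-cover structure and finishes with the two-line parity dichotomy on $d_G(S,u)$ versus $d_G(S',u)$.

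What each buys: the paper's path-surgery yields the pointwise inequality $d_{\G(\hat S)}(v^\ast,u')\le d_{\G(S)}(v^\ast,u')$ directly, though the $s=0$ step there is delicate. Your formula is cleaner and more conceptual; in fact, in your Case~2 one has $w_-(S,u)\ge a+1$ (any $S$-walk of parity $\not\equiv a$ has length $>a$), so $d_{\G(S')}(v^\ast,u')\le a+1\le 1+w_-(S,u)=d_{\G(S)}(v^\ast,u')$, recovering the same pointwise bound. The one place to tighten slightly is the phrase ``re-entering $v^\ast$ is never profitable'': the honest justification is simply that a shortest $v^\ast$--$u'$ path in $\G(S)$ is simple, hence its projection to $G^\ast(S)$ never revisits $v^\ast$ (which has a unique preimage), so one may pass directly to walks in $G$ starting from $S$; no separate profitability claim is needed.
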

  \begin{proof}
    Let $\hat{S}=S\cup \{v\}$. It suffices to prove that
    $d_{\G(\hat{S})}(v^\ast,u)\le d_{\G({S})}(v^\ast,u)$ for all
    $u\in V\cup V'$. Obviously
    $d_{G^\ast(\hat{S})}(v^\ast,w) \le d_{G^\ast(S)}(v^\ast,w)$ for
    all $w\in V$. Hence, we only have to consider $u\in V'$, i.e.,
    there exists $w\in V$ with $w'=u$.

    Let ${\cal P}_S(w)$ be the set of all path in $G^\ast(S)$ from
    $v^\ast$ to $w$ that use exactly one cross-edge of $G^\ast(S)$.
    Let $P \in {\cal P}_S(w)$ be a path of shortest length among all
    path in ${\cal P}_S(w)$. By Corollary~\ref{col:cross} the length of
    $P$ is equal to $d_{\G({S})}(v^\ast,w)$. It suffices to prove that
    there exists a path $P'\in {\cal P}_{\hat{S}}(w)$ that has at most
    the length of $P$. Consider $P$ as a path in $G^\ast(\hat{S})$.
    Then some cross edges may have become forward edges and vice
    versa. Let $s$ be the number of cross edges of $P$. Assume $s=0$.
    Then $d_{G^\ast(\hat{S})}(v^\ast,w)$ is the length of $P$. But the
    length of $P$ is also equal to $d_{G^\ast(\hat{S})}(v^\ast,w)+1$.
    Thus, $d_{G^\ast(\hat{S})}(v^\ast,w) > d_{G^\ast(S)}(v^\ast,w)$.
    Contradiction and thus, $s>0$. Assume that $s>1$. Let $(v_1,v_2)$
    be the first cross edge of $P$ in $G^\ast(\hat{S})$. Then there
    exists a path $P_1$ in $G^\ast(\hat{S})$ from $v^\ast$ to $v_2$
    that consist s of forward edges only. Construct a new path $P'$ by
    appending to $P_1$ the subpath of $P$ from $v_2$ to $w$. Then $P'$
    is shorter than $P$ and consists of $s-1$ cross edges. Repeating
    this construction will lead to a path
    $Q \in {\cal P}_{\hat{S}}(w)$ that is no longer than $P$. 
  \end{proof}

  The reason that Lemma~\ref{lem:monoton} does not hold for a
  bipartite graph $G$ is that depending on $S$ and $v$ the graph
  $G^\ast(S)$ may be bipartite while $G^\ast(S\cup \{v\})$ is
  non-bipartite.

\begin{proof}(Theorem~\ref{theo:nonbip})
  (2) If $r^{ni}_k(G)=1$ then $\FL_k(G)\le 2$ by
  Theorem~\ref{theo:mainRes} (\ref{theo:mainRes1}) and thus $\FL_k(G)=2$
  by the first part. Next suppose that $\FL_k(G)= 2$ and let $S$ be a
  subset of $V$ such that $\FL_G(S)=2$. Assume that $G[S]$ contains an
  isolated node $v$. Since $G$ is non-bipartite the shortest path in
  $\G(S)$ from $v^\ast$ to $v'$ has length at least $4$. Then
  Theorem~\ref{theo:fundametal} implies that
  $\FL_k(G)\ge 3$. This contradicts the assumption that $\FL_k(G)= 2$.
  Therefore, $G[S]$ contains no isolated node and hence
  $r_k(G)=r^{ni}_k(G)$. Since $2 = \FL_k(G) \ge r_k(G)+1$ we have
  $r^{ni}_k(G)=1$.

  (3) By Lemma~\ref{lem:tree_graph} there exists a spanning tree $T$
  of $G$ such that $r_k(G)=r_k(T)$. Note that
  $r_k(G) +1 \le \FL_k(G)$. By Lemma~\ref{lem:treeradius1}
  $kr_k(T) \ge Rad(T)-k/2$. Hence
  \[k\FL_k(G) \ge kr_k(G) +k = kr_k(T) +k\ge Rad(T) + k/2\] since
  $Rad(T)\ge Rad(G)$.

      Let $S\subset V$ with $|S|=k$ such that $\FL(S)=\FL_k(G)$. Let
    $v\in V\setminus S$. Then Lemma~\ref{lem:monoton}
    yields $\FL_{k+1}(G) \le \FL(S\cup\{v\}) \le \FL(S) = \FL_k(G)$.
\end{proof}

Theorem~\ref{theo:nonbip}(\ref{theo:nonbip2}) does not hold for
bipartite graphs as a path $P$ with $12$ nodes demonstrates,
$\FL_3(P)=2$ and $r^{ni}_3(P)=5$.
Theorem~\ref{theo:mainRes} (\ref{theo:part2}) implies that
$r_k(G)\in \{1,2\}$ if $G$ is bipartite and $\FL_k(G)=2$. Thus,
$r_k(G)=1$ implies $\FL_k(G)\le2$. Bipartite graphs with $r_k(G)=2$
can have $\FL_k(G)> 2$ as the example in Figure~\ref{fig:bipartite}
shows.

\subsection{Proof of Theorem~\ref{theo:special} }  
\label{sec:proof4}

For $k=1$ we have $\FL_1(G)=r_1(G)$ provided $G$ is bipartite and vice
versa (Theorem~\ref{theo:upper_bound}, Theorem 11 \cite{Hussak:2020}).
For $k>1$ we have a slightly different situation.

\begin{proof}(Theorem~\ref{theo:special})
  (1) If $G$ has a $k$-center $S$ that is contained in $V_1$ or $V_2$ then
  the graph $G^\ast(S)$ has no cross edge with respect to $v^\ast$
  (nodes with the same distance to $v^\ast$ are either in $V_1$ or
  $V_2$). Thus, Theorem~\ref{theo:fundametal} implies
  $\FL_k(G)=r_k(G)$.

  Next assume that $\FL_k(G)=r_k(G)$. Let $S\subset V$ with $|S|=k$
  and $\FL(S)=\FL_k(G)$. Then $r_k(G)\le d_G(S,V)\le \FL(S) =r_k(G)$,
  i.e., $d_G(S,V)=r_k(G)$. Since $\FL_k(G)=r_k(G)$, $G^\ast(S)$ does
  not contain a cross edge with respect to $v^\ast$. Let
  $S_i=S\cap V_i$. Denote by $V^i$ the set of nodes that receive a
  message when \SynFl is executed for set $S_i$ (including $S_i$).
  Since there are no cross edges, there exists no edge connecting a
  node from $V^1$ with a node from $V^2$. Since $G$ is connected,
  either $S_1=\emptyset$ or $S_2=\emptyset$. This implies the result.

  (2)   Let $S\subseteq V=V_1\cup V_2, E$ with $|S|=k$ such that
  $d_G(S,V)=r_k(G)$. If $S\subseteq V_1$ or $S\subseteq V_2$ then the
  result follows from the first part. WLOG assume that
  $k\le \abs{V_2}$. Let $S_1 = S \cap V_1$. There exists
  $T\subseteq V_2$ with $|T|\le |S_1|$ such that
  $S_1 \subseteq N(T)\subseteq V_1$. Then
  $d_G(T\cup (S \cap V_2), V) \le r_k(G) + 1$. Let $T_r\subseteq V_2$
  such if $S_n=T_r\cup T\cup (S \cap V_2)$ then $|S_n|=k$. Clearly,
  $d_G(S_n,V)\le r_k(G)+1$. Obviously, $G^\ast(S_n)$ does not contain
  a cross edge, since the graph $G^\ast(S_n)$ is bipartite. Hence,
  $\FL_k(G) -r_k(G)\le 1$ by Theorem~\ref{theo:fundametal} and
  Corollary~\ref{cor:bipartite}.

  (3) If $k\ge n/2$ then $r_k(G)=1$. Thus, $\FL_k(G) -r_k(G)\le 2$ by
  Theorem~\ref{theo:mainRes} (\ref{theo:mainRes4}). The case $k < n/2$
  is implied by (2).  
\end{proof}

Fig.~\ref{fig:bipartite} shows on the right a bipartite graph $G$ with
$r_2(G)=2$ and $\FL_2(G)=3$. Thus, the bound of
Theorem~\ref{theo:special} (\ref{theo:special2}) is sharp.
Fig.~\ref{fig:kamm} demonstrates that the bound of
Theorem~\ref{theo:special} (\ref{theo:special3}) is sharp.

\begin{figure}%
  \begin{minipage}{0.49\textwidth}
  \hfill
    \includegraphics{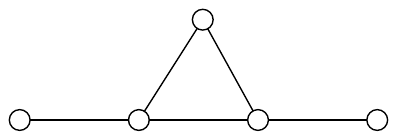}
  \hfill\null
  \end{minipage}
  \hfill
  \begin{minipage}{0.49\textwidth}
  \hfill
    \includegraphics{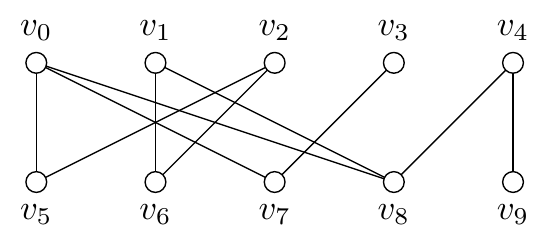}
  \hfill\null
  \end{minipage}
  \caption{Left: A non-bipartite graph with $Rad(G)=2$ and
    $\FL_1(G) = 3$. Right: A bipartite graph with $r_2(G)=2$ and
    $\FL_2(G)=3$ ($d(\{v_0,v_8\},V) =2$,
    $\FL(\{v_0,v_1\})=3$). \label{fig:bipartite}}
\end{figure}

\subsection{Proof of Theorem~\ref{theo:NP} } 
\label{sec:proof5}

\begin{proof}(Theorem~\ref{theo:NP})
  Clearly the $(k,c)$-flooding problem is in $NP$. We will reduce the
  $NP$-complete problem of total domination to $(k,c)$-flooding
  \cite{Henning:2009}. Let $\cal A$ be an algorithm that solves the
  $(k,c)$-flooding problem in polynomial time. Let $G(V,E)$ be a graph. If
  $G$ is bipartite then we select a node $v\in V$ and attach the
  gadget shown in Fig.~\ref{fig:gadget}. The resulting graph $G'$ is
  obviously non-bipartite. First we prove that $G$ has a total
  dominating set of cardinality at most $k$ if and only if $G'$ has a
  total dominating set of cardinality at most $k+ 2$.
  Let $S$ be a total dominating set of $G$. Then $S\cup \{b,c\}$ is a
  total dominating set of $G'$ with $\abs{S}+2$ nodes. Let $T$ be a
  total dominating set of $G'$. Then $c\in T$. If $b\not\in T$ then
  $d\in T$ or $e\in T$. Thus, $T\setminus \{d,e\} \cup \{b\}$ is
  also a total dominating set of $G'$. Hence, we can assume that
  $b\in T$. If $a\in T$ then $T\setminus \{a\} \cup \{v\}$ is also
  a total dominating set of $G'$. Hence, we can assume that
  $a\not \in T$. Therefore, $T\setminus \{a,b,c,d,e\}$ is a total
  dominating set of $G$ with at most $\abs{T} - 2$ nodes.

  Thus, we can assume $G$ is non-bipartite. Next we apply algorithm
  $\cal A$ to $G$. By Theorem~\ref{theo:nonbip} (\ref{theo:nonbip2}) we
  have $\FL_k(G)= 2$ if and only if $r^{ni}_k(G)=1$. Note that
  $r^{ni}_k(G)=1$ implies that $G$ has a total domination set of order
  at most $k$ and vice versa. Hence, algorithm $\cal A$ can decide whether $G$
  has a total domination set of size $k$.
\end{proof}

\begin{figure}[h]%
  \hfill
  \includegraphics{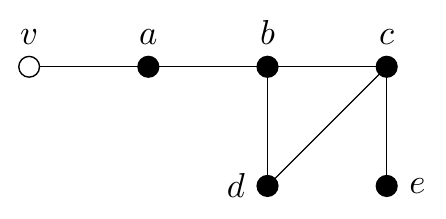}
  \hfill\null
  \caption{This gadget consisting of five nodes is attached to a node $v$.\label{fig:gadget}}
\end{figure}

\section{Conclusion and Future Work}
In this paper we analyzed amnesiac flooding for a set $S$ of $k$
initiators and introduced the $(k,c)$-flooding problem. The main technical
result is the construction of a bipartite graph $\G(S)$ such that the
executions of amnesiac flooding on $G$ and $\G(S)$ are equivalent.
This allowed us to prove upper and lower bounds for the round
complexity of amnesiac flooding. Furthermore, we showed the
relationship between the $k$-center and $(k,c)$-flooding for bipartite
graphs and proved that $(k,c)$-flooding is $NP$-complete.

There are several open problems related to amnesiac flooding. Firstly,
we suspect that the first bound stated in
Theorem~\ref{theo:nonbip} (\ref{theo:part3}) is not sharp. Instead we
have the following conjecture: If $G$ is connected, non-bipartite then
$k\FL_k(G)\ge Rad(G) + k -1$. If $\FL_k(G) \ge r_k(G) +2$ then the
proof of Theorem~\ref{theo:nonbip} (\ref{theo:part3}) can be used to
prove this conjecture. Thus, in proving the conjecture one can assume
$\FL_k(G) = r_k(G) +1$. This new bound would be sharp. Let $H_{12}$ be
the graph with 12 nodes as depicted in Fig.~\ref{fig:h12}. Connect
eight copies of $H_{12}$ by adding $7$ edges connecting the copies one
after the other at the end nodes. The resulting graph $G$ has $96$
nodes, $Rad(G)=40$, and $\FL_3(G)=14$.

Secondly, by Theorem~\ref{theo:special} (\ref{theo:special3})
$\FL_k(G)$ assumes one of three values in case $G$ is bipartite. Is it
possible to infer from structural parameters of $G$ the value of
$\FL_k(G)$ in this case? A naive approach to determine $\FL_k(G)$
requires $O(n^km)$ time.

\begin{figure}[h]%
  \hfill
    \includegraphics[valign=t]{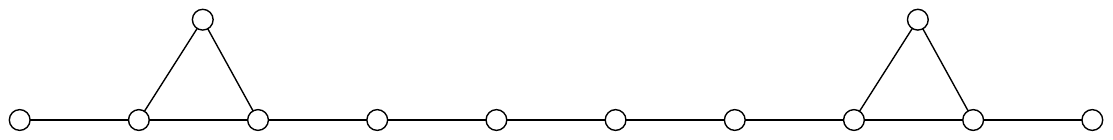}
  \hfill\null
  \caption{Eight copies of this graph glued together prove the
    sharpness of the lower bound
    $k\FL_k(G)\ge Rad(G) + k -1$.\label{fig:h12}}
\end{figure}

Theorem~\ref{theo:mainRes} (\ref{theo:mainRes1}) suggests that the
well-known greedy algorithm with approximation ratio $2$ for the
metric $k$-center might be a good heuristic to determine a set $S$
with small value of $\FL_k(G)$. The design of an approximation
algorithm for the $(k,c)$-flooding problem with approximation ratio
$2$ is an open problem. Also, distributed algorithms for this problem
have not been researched.


Another open issue is to devise a stateless information dissemination
algorithm for synchronous distributed systems with a better round
complexity than \SynFl. Is it possible to achieve the optimal time
complexity of $Diam(G)$ rounds? Algorithm \SynFl cannot be executed in
an asynchronous system. Hussak et al.\ showed that a simple adaptation
of amnesiac flooding to asynchronous systems does not terminate
\cite{Hussak:2020}. Does a stateless asynchronous information
dissemination algorithm exist?

Denote by $d_r(v,w)$ the number of the round in which node $w$
receives the last message when amnesiac flooding is started in node
$v$. It is straightforward to prove that for non-bipartite graphs this
does not define a metric but a meta-metric in the sense
of~\cite{Vaisala:2005}. Hence it it can be used to quantify the
importance of a node in a given network, i.e., it defines a centrality
index \cite{Zweig:2016}. There are many centrality indices proposed in
the literature (degree, closeness, betweenness, eigenvector centrality
etc.). The question is whether it coincides with any of the known
centrality indices.

\bibliographystyle{unsrt}
\bibliography{document}

\end{document}